\documentclass{article} 
\usepackage{iclr2026_preprint,times}
\usepackage{tikz}
\usetikzlibrary{automata,positioning,arrows.meta}

\usepackage{amsmath,amsfonts,bm}
\usepackage{amsthm}

\usepackage{hyperref}
\usepackage[capitalize]{cleveref}
\usepackage{url}
\usepackage{booktabs}
\usepackage{multicol}
\usepackage{mathabx}
\usepackage{bm}
\usepackage{todonotes}
\usepackage{newfile}
\usepackage{multirow}
\usepackage{wrapfig}

\newcommand{\Parikh}{\Psi}









\def\eqref#1{equation~\ref{#1}}









\def\1{\bm{1}}










\DeclareMathAlphabet{\mathsfit}{\encodingdefault}{\sfdefault}{m}{sl}
\SetMathAlphabet{\mathsfit}{bold}{\encodingdefault}{\sfdefault}{bx}{n}











\newcommand{\R}{\mathbb{R}}

\newcommand{\softmax}{\mathrm{softmax}}



\newtheorem{theorem}{Theorem}[section]
\newtheorem{lemma}[theorem]{Lemma}
\newtheorem{proposition}[theorem]{Proposition}
\newtheorem{corollary}[theorem]{Corollary}
\newtheorem{definition}[theorem]{Definition}
\newtheorem{example}{Example}

\newcommand{\cttn}[2]{\ensuremath{\textsc{\textbf{\#}} \left[ #1 \right] \; #2}}

\newcommand{\cif}[3]{\ensuremath{#1\;\mathbf{?}\; #2\; \textbf{:} \;#3}}

\newcommand{\ialphabet}{\Sigma}
\newcommand{\states}{P}
\newcommand{\finals}{F}
\newcommand{\Cons}{\Phi}
\newcommand{\transrel}{\Delta}

\newcommand{\N}{\mathbb{N}}
\newcommand{\Z}{\mathbb{Z}}
\newcommand{\CLTL}{\text{K}_t[\#]}

\newcommand{\tokenem}{\texttt{em}}
\newcommand{\Lang}{L}

\newcommand{\SOS}{\vdash}

\newcommand{\LeftCount}{\overleftarrow{\#}}

\newcommand{\defn}[1]{\emph{#1}}

\newcommand{\sem}[1]{[\![#1 ]\!]}
\newcommand{\Sem}[2]{[\![#1 ]\!]_{#2}}

\newcommand{\defeq}{\leftarrow}

\newcommand{\OMIT}[1]{}

\newcommand{\vecV}{\textbf{v}}
\newcommand{\vecW}{\textbf{w}}

\newif\ifdraft\draftfalse
\ifdraft
\newcommand\anthony[1]{{\color{blue}
\small [#1 - \textbf{Anthony}]}}
\newcommand\michael[1]{{\color{magenta}
\small [#1 - \textbf{Michael}]}}
\newcommand\georg[1]{{\color{brown}
\small [#1 - \textbf{Georg}]}}
\newcommand\hongjian[1]{{\color{cyan}
\small [#1 - \textbf{Hongjian}]}}
\newcommand\revised[1]{{\color{blue}{#1}}}

\else
\newcommand\anthony[1]{}
\newcommand\michael[1]{}
\newcommand\georg[1]{}
\newcommand\hongjian[1]{}
\newcommand\revised[1]{#1}

\fi

\newcommand{\ltr}[1]{\mathtt{#1}}
\newcommand{\bu}{\bm{u}}

\newcommand{\bx}{\bm{x}}
\newcommand{\by}{\bm{y}}
\newcommand{\rperel}{\mathfrak{R}}
\DeclareMathOperator{\dom}{dom}
\DeclareMathOperator{\src}{src}
\DeclareMathOperator{\tgt}{tgt}

\usepackage{xcolor}
\usepackage{tcolorbox}

\newenvironment{calign}
  {
    \setcounter{equation}{0}%
    \footnotesize
    \vspace{-1em} 
    \flalign
  }
  {
    \endflalign
    \vspace{-2em}
  }

\title{Softmax Transformers are Turing-Complete}


\author{
Hongjian Jiang \\
RPTU Kaiserslautern-Landau \\
\texttt{hongjian.jiang@rptu.de} \\
\And
Michael Hahn \\
Saarland University \\
\texttt{mhahn@lst.uni-saarland.de} \\
\And
 Georg Zetzsche\\
  MPI-SWS\\
  Kaiserslautern, Germany\\
  \texttt{georg@mpi-sws.org} \\
  \And
Anthony W.\ Lin \\
RPTU Kaiserslautern-Landau and MPI-SWS\\
Kaiserslautern, Germany \\
\texttt{awlin@mpi-sws.org} \\
}

%

\iclrfinalcopy 
\begin{document}

\maketitle

\begin{abstract}
    Hard attention Chain-of-Thought (CoT) transformers are known to be
    Turing-complete. However, it is an open problem whether softmax attention
    Chain-of-Thought (CoT) transformers are Turing-complete. In this paper, we
    prove a stronger result that length-generalizable softmax CoT transformers 
    are Turing-complete. 
More precisely, our Turing-completeness proof goes via the CoT extension of the
    Counting RASP (C-RASP), which correspond to softmax CoT transformers that 
    admit length generalization. We prove Turing-completeness for CoT C-RASP 
    with causal masking over a unary alphabet (more generally, for
    letter-bounded languages). While we show this is not Turing-complete for
    arbitrary languages, we prove that its extension with relative positional encoding is Turing-complete for arbitrary languages. We empirically validate our 
    theory by training transformers for languages requiring complex 
    (non-linear) arithmetic reasoning.
\end{abstract}


\section{Introduction}
\label{sec:intro}

Transformers \citep{Vaswani17} have enabled powerful Large Language Models
(LLMs) with Chain-of-Thought (CoT) steps, which are capable of complex reasoning
(cf. \citep{wei-cot,gpt4report}). But \emph{what task can (and cannot) be done 
by CoT transformers?} This fundamental question lies at the heart of the recent
effort in understanding the ability of transformers through the lens of formal
language theory (see the survey \cite{transformers_survey}). In particular, the
question whether CoT transformers is \emph{Turing-complete} --- that is, capable
of solving any problems solvable by Turing machines --- is especially pertinent;
see the work (cf. \citep{perez,BPG20,Will-cot,qiu2025ask,constant-cot}). 

\paragraph{Are CoT transformers Turing-complete?} All existing proofs
of Turing-completeness of CoT transformers (cf. 
\citep{perez,BPG20,Will-cot,qiu2025ask,constant-cot}) employ 
\emph{hardmax attention}, which is a rather unrealistic assumption.
In particular, its use comes at the cost of a lack of a trainability guarantee.
It is still an open question
to date whether CoT transformers that use softmax attention are Turing-complete,
and whether one can guarantee some sort of trainability.
\revised{A closer look at these proofs reveals a direct simulation of Turing
machines using CoT transformers, where the position of the head of the Turing 
machine should be ``deduced'' by means of attention from the CoT tokens. This
was so far achieved using averaging hard attention, which uses $-|\langle
x,y\rangle|$ attention score (as in \citep{perez}) or layer norm (as in
\cite{Will-cot}). It is unclear how to achieve this using softmax; more
generally, it is still an open question if softmax transformers can capture 
languages of averaging hard-attention transformers (see 
\cite{YC24,simulating}).}

\paragraph{Contributions.}
The main contributions of this paper are (i) to prove for the first time that 
softmax CoT transformers are Turing-complete, and (ii) to provide a
guarantee of 
\revised{length generalizability}.

More precisely, we use the framework from \cite{len-gen-huang} of
\emph{length-generalizable} softmax transformers. \revised{Roughly speaking, 
a language $L$ is length generalizable if an idealized learning procedure (in
the sense of \cite{len-gen-huang}) converges to $L$, if provided with all inputs of
length $\leq i$ for some $i$. In
particular, the authors showed that a simple declarative language 
called C-RASP (with causal masking) \cite{YC24} can be converted into their 
framework, thereby also admitting
length generalization. To date, this is still one of the most predictive
notions of trainability for transformers that have solid theoretical 
foundations, as well as extensive empirical evidence.
Our results use the extensions of these models with CoT 
steps.}

%

\revised{
As we noted, a direct simulation of Turing machines using softmax transformers
is rather tricky, as it would be challenging to extract the position of the head
of the Turing machine by means of softmax attention. The main innovation in our
proof technique is to exploit the \emph{counting power} of softmax transformers
(through C-RASP) to simulate \emph{Minsky's counter machines}, instead of Turing
machines. This would entail Turing-completeness of softmax transformers. The
details of our results are below.}

We first show that CoT C-RASPs with causal masking are Turing-complete over a
unary alphabet $\ialphabet = \{a\}$. More generally, we show that
Turing-completeness holds for \emph{letter-bounded languages}, i.e., 
$L \subseteq a_1^*\cdots a_n^*$, where $a_1,\ldots,a_n$ are distinct letters in
the alphabet. 
Such languages are especially interesting because of their ability
to model complex number-theoretic concepts (e.g., prime numbers, exponentiation,
multiplication, etc.).

Interestingly, we show that CoT C-RASPs with causal masking are \emph{not} 
Turing-complete over arbitrary languages. In fact, simple languages (e.g. 
palindromes) cannot be solved by CoT C-RASPs.
To address this limitation, the next novelty in our proof is to
extend CoT C-RASPs with \emph{Relative Positional Encodings (RPEs)} (cf.
\cite{shaw18,Liutkus21,Dufter22}), which
assigns a positional information to any token \emph{relative} to another token.
We extend the framework of \cite{len-gen-huang} by adding RPEs, and show that
length-generalizability still holds. 
\revised{
Next,
we show that RPEs are sufficient for CoT C-RASP to work with arbitrary input 
words: they allow us to compute an unambiguous encoding of the input word into 
a number that can be accessed by the simulated counter machine. This results
in full Turing-completeness in the presence of RPEs.}

We provide an experimental validation of our results for CoT C-RASP and CoT
C-RASP[RPEs] by showing length generalization of transformers for complex
number-theoretic concepts with \emph{unary} representation (to be captured by
CoT C-RASP) and with \emph{binary} representation (to be captured by CoT
C-RASP[RPEs]). For example, the concept of prime numbers will be represented
as the language $L = \{ a^p : p \text{ is prime}\}$ with unary representation,
and as $L' = \{ \text{bin}(p) : p \text{ is prime}\}$ with
binary representation (where $\text{bin}(p)$ denotes the binary representation
of $p$, e.g., $5$ is written as $101$).

\OMIT{
	\emph{Novelty in our proofs.} \revised{Prior Turing-completeness proofs simulate Turing machines using non-uniform average hard attention, which is not known to be implementable in softmax transformers. Therefore, our first innovation is to simulate Turing-powerful counter machines~\cite{minsky1961recursive}
	using C-RASP. However, since intuitively, C-RASP only have access to counting properties of the input word, our second innovation is a mechanism that converts the input word so that a numerical (and unambiguous) encoding of the input word can be accessed by a C-RASP via counting properties. The latter mechanism uses relative positional encodings (RPEs).}
}



\paragraph{Organization.} We start with the CoT models in \cref{sec:model}. We
then prove Turing-completeness results for the unary and letter-bounded cases in
\cref{sec:unary}. Turing-completeness for the general case is proven in
\cref{sec:general}. We report our experiments in \cref{sec:exp}. Finally, we
conclude in \cref{sec:conc}.

\section{Models for Trainable CoT Transformers}
\label{sec:model}

\subsection{Transformers and C-RASP}
\paragraph{Softmax Transformers.} 
We assume transformer decoders with softmax attention and causal masking (Softmax Attention Transformers, SMAT).
Our formal definition of softmax transformers follows that of \cite{len-gen-huang}. 
Attention weights are defined as
\begin{equation}
    \bar w = \softmax(\log n \cdot \{\vecV_j^T {\bf K}^T {\bf Q} \vecV_i\}_{j=1}^i)
    \label{eq:softmax}
\end{equation}
where $\vecV_i$ denotes activations at position $i$, and ${\bf K}$, ${\bf Q}$ transform these to keys and queries, respectively.
Here, scaling with $\log n$ is included, as it is needed to theoretically represent sparse functions across unboundedly input strings and circumvent theoretical limitations of soft attention \citep{chiang2022overcoming, edelman2022inductive}.
For the feedforward networks, we assume one-layer networks, where each hidden unit has either ReLU or Heaviside activation.
Here, as in \cite{len-gen-huang}, Heaviside is needed to theoretically represent functions with sharp thresholds; at any finite input length, it can be arbitrarily closely approximated using ReLU MLPs.
As is standard, we encode an input $x \in \ialphabet^*$ by applying a token embedding function $\tokenem: \ialphabet \to \R^k$ for some dimension $k$.

To define the computation of CoT via SMAT, we need the transformer to be able to output a token.
We further define an output function $o : \R^d \to \Sigma$, parameterized by applying a linear function $\R^d \to \R^{|\Sigma|}$ followed by an argmax selecting the symbol receiving the highest score.
Overall, we view an SMAT as a length-preserving map $T : \Sigma^* \to \Sigma^*$, where $T(x)_i$ indicates the symbol predicted after reading the prefix $x_1\dots x_i$.

We refer to Appendix~\ref{app:model} for a formal definition and further discussion of design choices.
\revised{We further refer to Appendix~\ref{app:model:len-gen-huang} for a brief primer on the framework and results of \cite{len-gen-huang}.}

\OMIT{
\paragraph{Uniform AHAT.} Uniform AHAT (also written as AHAT[U]) is a 
subclass of softmax attention transformers that can also be described using
Average Hard Attention Transformers (AHATs). More precisely, each layer of
AHAT[U] consists of two affine transformations $A,B$ and one piecewise linear
function $C$ such that $\langle A\vecV,B\vecW\rangle$ is always a constant $c$
for all vectors $\vecV$ and $\vecW$. Given a sequence
\[
    \vecV_1,\ldots,\vecV_n
\]
to an AHAT[U] layer, it outputs
\[
    \vecW_1,\ldots,\vecW_n
\]
where
\[
    \vecW_i := C(\vecV_i,\vecV_i')
\]
where $\vecV_i' := \frac{\sum_{j=1}^i \vecV_j}{j}$. That is, we apply strict
future masking and then simply take the average of all vectors preceeding (and
including) the current vector $\vecV_i$. 

An AHAT[U] is simply a composition of finitely many AHAT[U] layers. To interface
AHAT[U] with an input string $w \in \ialphabet^+$, we simply apply a token
embedding function $\tokenem: \ialphabet \to \R^k$ for some dimension $k$. To
define a CoT AHAT[U], we need the transformer to be able to output a token.
To this end, we define the output function
\[
    o: \R^d \to \tokenem(\ialphabet)
\]
parameterized as a piecewise linear function.
}

\paragraph{C-RASPs.}
\OMIT{We first recall the syntax of C-RASP \cite{YC24,simulating}.
 \begin{multicols}{2}
    \begin{center}
    \begin{flushleft}
    \begin{center}
        \textbf{Boolean-Valued Operations}
    \end{center}
    \begin{tabular}{ll}
        \toprule
         \textbf{Initial} & $P(i):=Q_a(i)$ for $a\in\Sigma$ \\
         \midrule
         \textbf{Boolean} & $P(i):=\lnot P_1(i)$ \\
         & $P(i):=P_1(i)\land  P_2(i)$\\
         \midrule
         \textbf{Comparison} & $P(i):=C_1(i)\leq C_2(i)$\\
         \midrule
         \textbf{Constant} & $P(i):=1$\\
         \bottomrule
    \end{tabular}
    \end{flushleft}
    \end{center}
    
    \columnbreak
    
    \begin{center}
    \begin{flushleft}
    \begin{center}
        \textbf{Count-Valued Operations}
    \end{center}
    \begin{tabular}{ll}
        \toprule
         \textbf{Prefix Count} & $C(i):=\cttn{j\leq i}{P(j)}$ \\
         \midrule
        \textbf{Conditional} & $C(i):=\cif{P(i)}{C_1(i)}{C_2(i)}$\\
        \midrule
        \textbf{Addition} & $C(i):=C_1(i)+C_2(i)$\\
        \midrule
        \textbf{Subtraction} & $C(i):=C_1(i)-C_2(i)$\\
        \midrule
        \textbf{Constant} & $C(i):=1$\\
        \bottomrule
    \end{tabular}
    \end{flushleft}
    \end{center}
    \end{multicols}
}C-RASP is equivalent to the fragment $\CLTL$ \cite{YC24,simulating} of LTL[Count] \cite{BKLP24} with only past operator:
\begin{eqnarray*}
    \varphi & ::= & Q_a\ (a \in \ialphabet)\ |\ \varphi \wedge \varphi\ |\ \neg \varphi\ |\ \varphi \vee \varphi\ |\ t \sim t\ (t \in \{<,=,>\}) \\
    t    &  ::= & c\ (c \in \N)\ |\ \LeftCount[\varphi]\ |\ t+t
\end{eqnarray*}

Let us define the semantics of C-RASP by structural induction on the C-RASP
expressions. Suppose $w = w_1\cdots w_n \in \ialphabet^+$. [As a side remark, 
it is possible to also allow the empty string $\epsilon$ as input, and for this we 
can use the ``start-of-string'' symbol $\SOS$. We do not do this to avoid
clutter.] For syntactic
category $\varphi$, we will define $\Sem{\varphi}{w}$ as a bitstring $h_1\cdots
h_n \in \{0,1\}^n$. On the other hand, for syntactic category $t$, we will
define $\Sem{t}{w}$ as a sequence $m_1\cdots m_n \in \Z^n$ of integers. For each
sequence $\sigma$, we will write $\sigma(i)$ to denote the $i$th element in the
sequence. We start with the two base cases:
\begin{itemize}
    \item $\varphi = Q_a$. In this case, $h_i \in \{0,1\}$ is 1 iff $w_i = a$.
    \item $t = c$. In this case, $m_i = c$ for each $i$.
\end{itemize}
We now proceed to the inductive cases:
\begin{itemize}
    \item $\varphi = \psi_1 \wedge \psi_2$. Then, $h_i = 
        \min\{\Sem{\psi_1}{w}(i),\Sem{\psi_2}{w}(i)\}$.
    \item $\varphi = \psi_1 \vee \psi_2$. Then, $h_i = 
        \max\{\Sem{\psi_1}{w}(i),\Sem{\psi_2}{w}(i)\}$.
    \item $\varphi = \neg \psi$. Then, $h_i = 1 - \Sem{\psi}{w}(i)$.
    \item $\varphi = t \sim t'$. Then, $h_i = 1$ iff $\Sem{t}{w}(i) \sim
        \Sem{t'}{w}(i)$.
    \item $t = \LeftCount[\varphi]$. Let $m_0 = 0$. Then, for each $i > 0$,
        $m_i = m_{i-1} + 1$ if $\Sem{\varphi}{w}(i) =
        1$; else $m_i = m_{i-1}$. 
\end{itemize}

\paragraph{Relative Positional Encodings.}
We also define an extension C-RASP[RPEs] (resp. SMAT[RPEs]) of C-RASP (resp.
SMAT) with Relative Positional 
Encodings (RPEs), which are simply subsets $\rperel \subseteq \N \times \N$. We 
start with C-RASP[RPEs]. In the sequel, the notation $\sem{\rperel}$ refers to 
the function mapping each $(i,j) \in \N \times \N$ to $\{0,1\}$ such that 
$\sem{\rperel}(i,j) = 1$ iff $(i,j) \in \rperel$. For the syntactic
category $t$, we allow counting terms 
$
    \LeftCount_\rperel[\varphi]
$
which is to be interpreted at position $j$ as the cardinality of
$
    \{ i \in [1,j] : (i,j) \in \rperel, i \models \varphi\}.
  $
Thus, we include $i$ depending on the positional encoding of each $i$
relative to $j$. [Alternatively, $\rperel$ can be construed as allowing
positions at certain distances from each $j$.]
This generalizes the class C-RASP[periodic, local] defined by
\cite{len-gen-huang}, where $\rperel$ is either periodic or local. 

As for SMAT[RPEs], the definition is a simple modification of SMAT:
the formula in (\ref{eq:softmax}) becomes
\begin{equation}
    \bar w = \softmax(\log n \cdot \{\vecV_j^T {\bf K}^T {\bf Q} \vecV_i + \lambda \sem{\rperel}(i,j)
    \}_{j=1}^i).
    \label{eq:softmax-rpe}
\end{equation}
Here, we interpret $\lambda$ as a bias term and  $\sem{\rperel}(i,j)$ as 1 if $(i,j) \in \sem{\rperel}$; otherwise, it is
0.

\paragraph{Discussion of Relative Positional Encodings}
Relative positional encodings, which modify attention scores with positional information, are a popular approach for providing positional information to transformers.
Our formalization of RPEs is a simple formal abstraction of \emph{additive relative positional encodings}, which add a position-dependent term to the attention logits \citep{shaw18,dai2019transformer,xue2021mt5,press2022train,he2021deberta}.
Schemes in the literature differ in whether they are parameter-free (e.g., \cite{press2022train}) or involve learnable parameters.
We consider the especially simple case where $R$ is determined a-priori, parameter-free, and independent of the task at hand.
\revised{We provide more discussion in Appendix~\ref{app:model:rpe}.}

\subsection{Extensions with Chain-of-Thought}

Suppose $\Gamma$ is the (finite) set of possible CoT tokens. 
CoT tokens in some $\Gamma_F \subseteq \Gamma$ are reserved to indicate that
the computation is to terminate and that the input string is to be ``accepted''.
Let $\Gamma_{\neg F} = \Gamma \setminus \Gamma_F$.
We define a CoT to be a map $F : \Sigma^* \to \Gamma^* \cup \Gamma^\omega$, where $\Gamma$ is a finite set of CoT tokens, where all non-final symbols are in $\Gamma_{\neg F} \subseteq \Gamma$.
Here, note that we include both finite (terminating) CoTs in $\Gamma^*$ and infinite (non-terminating) CoTs in $\Gamma^\omega$. Consideration of non-terminating CoTs is needed for Turing completeness.
The language $L(F)$ recognized by $F$ is the set of all $w \in \ialphabet^*$ where $F(w)$ is finite and ends in an element of $\Gamma_F \subseteq \Gamma$.

\paragraph{CoT C-RASPs.} 
We extend C-RASP (resp. C-RASP[RPEs]) with CoTs
as follows.  A CoT
C-RASP expression (over $\Gamma$) is a non-empty sequence $S =
d_1,\ldots,d_l$ of 
definitions $d_i$ of the form:
\[
    O_{a_i} \defeq \varphi_{a_i},
\]
where $a_i \in \Gamma_{\neg F}$ and $\varphi_{a_i}$ a normal C-RASP (resp. C-RASP[RPEs])
expression. The intuition of $S$ is a \emph{switch} condition, which
will tell the program which token to output. 
\OMIT{
For the last definition $d_l$ in the sequence, it is the case 
that $\varphi_{a_l} = \top$; this intuitively is an ``else'' condition that 
says output $a_l$ if all the other conditions were not satisfied. 
}
$S$ \defn{outputs a token} on an input string $w \in (\ialphabet \cup S)^+$ if
$\Sem{\varphi_{a_i}}{w}(|w|) = 1$ for some $i$. 
The 
\emph{output} of $S$ on a string $w \in (\ialphabet \cup \Gamma_{\neg F})^+$ is 
defined to be $a_i$, where $i$ is the smallest index such that 
$\Sem{\varphi_{a_i}}{w}(|w|) = 1$ and that $\Sem{\varphi_{a_j}}{w}(|w|) = 0$
for each $j < i$. 
In this case, we write $S(w) = a_i$. Note that a CoT transformer might 
terminate without 
outputting a token if $\Sem{\varphi_{a_j}}{w}(|w|) = 0$ for each $j$; in this 
case, the input string $w$ will be immediately rejected. Here, we write
$S(w) = \bot$ (i.e. undefined).

A CoT C-RASP $S$ \defn{generates the string $U = U_1\cdots U_m \in \Gamma^*$ on the input $w \in \ialphabet^*$} if 
$S(wU_1\cdots U_{k-1}) = U_{k}$ for each $k = 1, \dots, m$.
Intuitively, this means that $S$ autoregressively outputs the symbols in $U$.
The \defn{language $L(T)$ accepted by a CoT C-RASP $S$} is defined to be the set
of all $w \in \ialphabet^*$ such that there exists a finite string $U \in
\Gamma^*$ ending in an element of $\Gamma_F$ such that $T$ generates $U$ on $w$,
and non-last symbols in $U$ are in $\Gamma_{\neg F}$.

We remark that, in many cases, the order of the sequence $S$ is not so important, especially if
we can ensure that at most $O_{a_i}$ is going to be satisfied. We will use this
in the sequel.

\OMIT{
introduce an \emph{output predicate} $O_a$ for each possible CoT token $a$. 
That is, we extend the syntactic category $\varphi$ with the additional base
case $O_a$ for each $a \in \Gamma$, where $\Gamma$ denotes the set of all
possible CoT tokens. Moreover, $\Gamma$ will in the sequel be always a finite
set. The semantics of 

We simply assert that exactly one $O_a$ 
(among all CoT tokens) is true for each position. We then read $O_a$ from the
last position in the output.
}

\paragraph{CoT SMATs.}
Recall that we view an  SMAT $T$ as a length-preserving map $T : \Sigma^* \to \Sigma^*$, where $T(x)_i$ indicates the symbol predicted after reading the prefix $x_1\dots x_i$.
An SMAT $T : (\Sigma \cup \Gamma)^* \rightarrow (\Sigma \cup \Gamma)^*$
\defn{generates the string $U = U_1\cdots U_m \in \Gamma^*$ on the input $w$} if
$T$ autoregressively predicts the string $U$ -- that is, if $T(wU_1\cdots
U_{k-1}) = U_{k}$ for each $k = 1, \ldots, m$.
The \defn{language $\Lang(T)$ accepted by a CoT SMAT $T$} is defined to be the
set of all $w \in \ialphabet^*$ such that there exists a finite string $U \in
\Gamma^*$ ending in $\Gamma_F$ such that $T$ generates $U$ on $w$, and non-last 
symbols in $U$ are in $\Gamma_{\neg F}$

\begin{proposition}\label{prop:cot-crasp-to-cot-smat}
If a language is accepted by a CoT C-RASP (resp. C-RASP[RPEs]), then it is also accepted by a CoT SMAT (resp. SMAT[RPEs]).
\end{proposition}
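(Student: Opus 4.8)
The plan is to reduce to the fact that a single \emph{normal} C-RASP (resp.\ C-RASP[RPEs]) expression is computable by an SMAT (resp.\ SMAT[RPEs]), and then to bolt on the CoT wrapper. Precisely, I would first establish the following: for any finite family $\varphi_1,\dots,\varphi_l$ of normal C-RASP (resp.\ C-RASP[RPEs]) expressions there is an SMAT (resp.\ SMAT[RPEs]) that, on every input $x$ over $\ialphabet\cup\Gamma$, stores the bits $\Sem{\varphi_1}{x}(|x|),\dots,\Sem{\varphi_l}{x}(|x|)$ in dedicated coordinates of its last-position activation $\vecV_{|x|}$. This is essentially the $\text{C-RASP}\,\subseteq\,$softmax-transformer simulation of \cite{YC24,len-gen-huang}; the only new point is to verify that it survives the addition of RPEs.

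The base simulation goes by structural induction, with one residual coordinate per Boolean sub-expression (holding its $\{0,1\}$ value at the current position) and one per count term (holding its value \emph{normalized by the current position $i$}, since averaging attention naturally returns $\tfrac1i\sum_{j\le i}(\cdot)$). Token embeddings handle the $Q_a$; a one-layer ReLU/Heaviside MLP handles $\wedge,\vee,\neg$, the constants, and the arithmetic on normalized counts, and realizes a comparison $t\sim t'$ as a Heaviside of the scaled difference of two normalized counts; and $\LeftCount[\varphi]$ is an attention head with constant key/query scores, whose $\log n$-scaled softmax is uniform and thus outputs the normalized prefix count of $\varphi$. For the RPE variant, $\LeftCount_\rperel[\varphi]$ is realized instead by a head that adds the bias $\lambda\,\sem{\rperel}(i,j)$ of (\ref{eq:softmax-rpe}): since the scores are multiplied by $\log n$, any fixed $\lambda>1$ makes the total weight outside $\{i\le j:(i,j)\in\rperel\}$ vanish as $n\to\infty$, so the head returns the normalized $\rperel$-restricted count, up to an error that (as in \cite{len-gen-huang}) is negligible against the gaps between distinct normalized counts. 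Keeping track of the $1/i$ factors through additions and constants is the one spot that needs care.

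It remains to assemble the CoT wrapper. Given $S=d_1,\dots,d_l$ with $d_i:\,O_{a_i}\defeq\varphi_{a_i}$, take the SMAT above for the family $\varphi_{a_1},\dots,\varphi_{a_l}$, with $b_1,\dots,b_l\in\{0,1\}$ the corresponding last-position coordinates, and make the output map $o$ a \emph{priority encoder}: the logit of $a_i$ is $(l-i+1)\,b_i$, so the $\argmax$ selects $a_i$ for the least $i$ with $b_i=1$, matching the output rule of $S$. To mirror $S(x)=\bot$, introduce one fresh sink token $\sharp\in\Gamma_{\neg F}$ with constant logit $\tfrac12$ (so it wins exactly when all $b_i=0$) and add the clause ``if the last input symbol is $\sharp$, output $\sharp$'' (detected by $Q_\sharp$ at the last position, with a dominating logit). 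Then $T(x)=S(x)$ whenever $S(x)\neq\bot$, and $T$ emits $\sharp$ forever once $S$ becomes undefined; since both $S$ and $T$ generate deterministically, their generated sequences agree symbol-by-symbol up to the first $\Gamma_F$ token, and $\sharp\notin\Gamma_F$, so $\Lang(T)=\Lang(S)$. The main obstacle is not a single computation but the need for $T$ to agree with $S$ \emph{pointwise} --- $T$ being a total map while $S$ may be undefined, and both being deterministic --- which is precisely what the sink-token clause handles; the $1/i$-normalization bookkeeping in the base simulation is the only other delicate point.
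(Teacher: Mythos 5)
Your proof follows essentially the same route as the paper's: both reduce to the known C-RASP-to-softmax-transformer simulation (Theorem~9 of \citet{len-gen-huang}), compute the bits of all $\varphi_{a_i}$ at the last position (handling RPEs by noting the attention-logit bias matches the limit-transformer $\phi(i,j)$ mechanism), and realize the switch by selecting the lowest-index true $\varphi_{a_i}$ in the output map. Your explicit sink-token treatment of the case $S(x)=\bot$ is a welcome refinement that the paper's sketch leaves implicit: since the SMAT is a total map, its behavior after $S$ becomes undefined must be pinned down for $\Lang(T)=\Lang(S)$ to hold, and emitting a non-final $\sharp$ forever does exactly that.
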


\begin{proof}[Proof Sketch for Proposition~\ref{prop:cot-crasp-to-cot-smat}; see \cref{app:model:proofs} for full details]
\revised{The starting point is Theorem 9 in \cite{len-gen-huang}, which shows that C-RASP can be simulated by \emph{limit transformers}, which in turn are closely related to SMAT[RPEs].
This earlier result concerned language acceptance by a single binary label computed at the final token; we extend it to CoT generation, obtaining a SMAT that at each position outputs a one-hot vector indicating which CoT token to output.}
\end{proof}

\subsection{Learnability with CoT}

We now show that CoT C-RASP is learnable in the framework of  \cite{len-gen-huang}.
Intuitively, this framework considers transformers being trained on data from some bounded length and then deployed on data of larger lengths.
We now make this formal.
As before, we view SMATs as defining length-preserving maps $T : \Sigma^* \to \Sigma^*$.
The \emph{hypothesis class $\Theta$} is the set of SMATs $T$ where each parameter vector and matrix of $T$ is represented at $p$ bits of precision, for some $p$ depending on $T$.

\begin{definition}\label{def:learnable}
A language $L$ is \emph{length-generalizably learnable with CoT} if there is a CoT $F$ with $L(F) = L$ such that the following holds:
For each $i=1, 2, 3, \dots$, 
use the idealized learning procedure from Definition 6 in \cite{len-gen-huang} to choose a sequence of SMATs $T_i \in \Theta$ ($i=1, 2, 3, \dots$) such that each $T_i$ generates $F(w)_{1\dots i-|w|}$ on all inputs w, $|w| \leq i$.\footnote{Such a sequence always exists, as there is just a finite number of inputs at each length $i$.}
Then, there is some $N_0$ depending on $L$ such that for all $i > N_0$, $T_i$ will exactly recognize the language $L$ with CoT.
\end{definition}
\revised{For the purpose of understanding the rest of the paper, the details 
of the idealized learning algorithm from Definition 6 of \cite{len-gen-huang} 
is not of utmost importance, though suffice it to say that it attempts to 
minimize a
regularizer that results in favoring simpler and smaller transformers.
Interested readers can find more details in 
Appendix~\ref{app:model:len-gen-huang}.}

Next, we analogously define the same notions in the presence of RPEs.
Given a set $\rperel \subseteq \N \times \N$, define the hypothesis class $\Theta[\rperel]$ as the set of SMAT[RPEs] $T$ with the RPE $\rperel$, where each parameter vector and matrix of $T$ is represented at $p$ bits of precision, for some $p$ depending on $T$, and where each $\lambda$ in (\ref{eq:softmax-rpe}) is fixed to 1.
We then define \emph{length-generalizably learnable with CoT with RPE $\rperel$} by replacing $\Theta$ with $\Theta_\rperel$ in Definition~\ref{def:learnable}.

Here, the intuition is that we can learn a single SMAT that works for all input lengths, even when training only on data from some bounded length, as long as the training length is sufficiently large.
We note that the definition of the learning setup is substantially simpler than in \cite{len-gen-huang} since our transformers use no absolute positional encodings.
Whereas \cite{len-gen-huang} used separate hypothesis classes $\Theta_n$ at each context window size $n$, our learning setup requires a single hypothesis class $\Theta$ that works for all input lengths.
We then obtain the following guarantee:
\begin{proposition}\label{prop:learnability-cot-crasp}
\OMIT{Consider a language expressible in C-RASP CoT.
Then it is length-generalizably learnable.}

Consider a language expressible in C-RASP[RPEs] CoT, using RPE $\rperel$.
Then it is length-generalizably learnable with RPE $\rperel$.
\end{proposition}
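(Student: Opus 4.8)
The plan is to reduce this to the learnability machinery of \cite{len-gen-huang} via \cref{prop:cot-crasp-to-cot-smat}: once the C-RASP[RPEs] CoT is realized by a single bounded-size SMAT[RPEs], the regularizer minimized by the idealized learner confines its hypotheses to a finite set, and a standard Occam / identification-in-the-limit argument forces eventual exact correctness.

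In more detail, let $S$ be a CoT C-RASP[RPEs] with $L(S)=L$. I would first apply the construction behind \cref{prop:cot-crasp-to-cot-smat} to obtain a CoT SMAT[RPEs] $T^\star$ over the \emph{same} $\rperel$ that at each step outputs the one-hot code of the token $S$ emits. Let $F$ be the CoT traced out by $T^\star$, with rejecting or non-halting runs arranged to emit a dedicated symbol of $\Gamma_{\neg F}$ forever; then $L(F)=L$, no rejecting trace of $F$ contains a symbol of $\Gamma_F$, and $F$ is a legal witness CoT for \cref{def:learnable}. By construction $T^\star$ generates $F(w)$ on every $w$, it lies in $\Theta_\rperel$ at some finite precision $p$ (with $\lambda$ pinned to $1$), and hence has a finite regularizer value $r^\star$; in particular it generates $F(w)_{1\dots i-|w|}$ for all $|w|\le i$ and all $i$, so it is a candidate for the idealized learner at every context length. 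Since the learner minimizes the regularizer subject to fitting the data, every chosen $T_i$ then has regularizer at most $r^\star$. The regularizer of \cite{len-gen-huang} grows with precision and architecture size while $\rperel$ and $\lambda$ are fixed, so only finitely many transformers $T^{(1)},\dots,T^{(M)}\in\Theta_\rperel$ fall below $r^\star$, and each $T_i$ is one of them.

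It remains to discard the incorrect candidates uniformly. For each $T^{(j)}$ with $L(T^{(j)})\neq L$, the CoT-acceptance semantics provides a finite input $w_j$ and a finite step $k_j$ at which the autoregressive output of $T^{(j)}$ first disagrees with $F(w_j)$: if $w_j\in L\setminus L(T^{(j)})$ then $T^{(j)}$ must already fail to generate the finite accepting trace $F(w_j)$ at some step $k_j$; if $w_j\in L(T^{(j)})\setminus L$ then $T^{(j)}$ generates a finite trace ending in $\Gamma_F$, which must differ from the all-$\Gamma_{\neg F}$ trace $F(w_j)$ at some step $k_j$. Put $N_0 = \max_j (|w_j|+k_j)$ over the finitely many bad indices $j$. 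Then for every $i>N_0$, the chosen $T_i$ has regularizer $\le r^\star$ and fits $F$ up to length $i$; were it some bad $T^{(j)}$, we would have $k_j\le i-|w_j|$, so its disagreement with $F(w_j)$ at step $k_j$ would already be visible inside $F(w_j)_{1\dots i-|w_j|}$, contradicting the fitting condition. Hence for all $i>N_0$ the learner returns a $T_i$ that faithfully generates $F$, so $L(T_i)=L(F)=L$, which is exactly the conclusion of \cref{def:learnable} with RPE $\rperel$.

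The main obstacle is the interface work underlying the second paragraph, not the combinatorics: one must check that the SMAT[RPEs] produced by \cref{prop:cot-crasp-to-cot-smat} genuinely lives in the \emph{single} fixed class $\Theta_\rperel$ — finite precision independent of the input length, bias $\lambda$ fixed to $1$, and the RPE term correctly balanced against the $\log n$-scaled logits so that the relevant attention heads remain sharp at all lengths — and that the $\bot$/rejection behaviour of CoT C-RASP is encoded so that $L(F)=L$ is preserved (hence the choice above to let rejecting runs loop on a non-final symbol). This is precisely where the RPE extension of the \cite{len-gen-huang} framework is used; once it is in place, $\rperel$ is simply carried along unchanged, being fixed and parameter-free.
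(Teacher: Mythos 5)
Your first step coincides with the paper's: apply \cref{prop:cot-crasp-to-cot-smat} to obtain a single SMAT[RPEs] $T^\star \in \Theta_\rperel$ over the same fixed, parameter-free $\rperel$ that generates the witness CoT $F$ at every length, and observe that this puts $T^\star$ in the learner's feasible set $U_i$ for every $i$. From that point on, however, the paper does not re-derive the learning guarantee: it invokes Theorem 7 / Corollary 18 of \citet{len-gen-huang} (restated as \cref{thm:guarantee}), which already establishes that expressibility by a single NoPE/RPE transformer is \emph{equivalent} to guaranteed length generalization under the Inference Procedure, and only checks that adding a fixed $\sem{\rperel}(i,j)$ bias leaves that argument untouched (it matches the $\phi(i,j)$ functions of Limit Transformers). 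You instead attempt to re-prove that implication from scratch via an Occam-style finite-elimination argument.

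The gap is in the pivot of that re-proof: the claim that only finitely many transformers in $\Theta_\rperel$ have regularizer at most $r^\star$ (or $r^\star+1$, since \cref{def:inference-procedure} only guarantees $\mathcal{R}(T_i)\le \inf_{T\in U_i}\mathcal{R}(T)+\tfrac{1}{i}$). The regularizer of \citet{len-gen-huang} controls the number of heads, the precision, and the ranks and norms of parameter matrices, but a sublevel set of it is not a finite set of hypotheses --- nothing stated bounds, e.g., the ambient width or depth, and even at fixed precision and bounded norm the class is not obviously finite up to functional equivalence. This finiteness assertion is precisely the hard content of the cited guarantee, whose actual proof proceeds by a compactness/extraction argument over Limit Transformers rather than by enumerating a finite hypothesis set; asserting it without proof leaves the main step unsupported. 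Once that step is repaired by citing \cref{thm:guarantee} directly, your downstream bookkeeping (choosing $N_0=\max_j(|w_j|+k_j)$ over the finitely many surviving bad candidates) becomes unnecessary, since the theorem already yields the threshold $N_0$. Your closing paragraph correctly identifies the remaining interface obligations (fixed $\lambda=1$, precision independent of length, the $\log n$ scaling), which is exactly the part the paper does verify explicitly.
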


\begin{proof}[Proof Sketch for Proposition \ref{prop:learnability-cot-crasp}; see \cref{app:model:proofs} for full proof]
\revised{The proof is a straightforward adaptation of results of \citet{len-gen-huang}.
Theorems 7 and 9 in that paper show length-generalizable learnability for languages expressible in C-RASP without CoT.
Building on Proposition~\ref{prop:cot-crasp-to-cot-smat}, we extend this to CoT C-RASP.}
\end{proof}

\section{Unary Case}
\label{sec:unary}

\revised{
In this section, we prove Turing-completeness of of CoT SMAT for unary alphabet,
i.e., $\ialphabet = \{a\}$. More precisely, CoT SMAT recognizes all recursively
enumerable languages over unary alphabet. In fact, we prove stronger
Turing-completeness results for letter-bounded languages and
permutation-invariant languages. In turn, these results will be proven by
establishing CoT C-RASPs for such languages and invoking Proposition
\ref{prop:cot-crasp-to-cot-smat}. To help with readability, the reader may see
Example \ref{ex:parity}, where we construct a CoT C-RASP for the PARITY
language, which is incidentally known (cf. \cite{len-gen-huang}) not to be 
expressible by C-RASP without CoT.} 

\begin{theorem}
    Each recursively enumerable language over a unary alphabet $\ialphabet = 
    \{a\}$ can be recognized by SMAT in the CoT setting.
    \label{th:unary}
\end{theorem}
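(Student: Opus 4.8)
The plan is to reduce the statement to a construction of a CoT C-RASP and then invoke Proposition~\ref{prop:cot-crasp-to-cot-smat}. By the introduction's strategy, the key idea is to simulate a \emph{Minsky counter machine} rather than a Turing machine directly, exploiting that C-RASP can count prefix occurrences of a predicate. Since every recursively enumerable set of natural numbers is exactly the set of $n$ such that some fixed two-counter Minsky machine $\CM$, started with $n$ in its first counter, halts and accepts, it suffices to: (i) fix such an $\CM$ for the given r.e.\ language $L \subseteq \{a\}^*$; (ii) build a CoT C-RASP $S$ over some CoT alphabet $\Gamma$ that, on input $a^n$, writes out a CoT transcript encoding the run of $\CM$ on $n$ and ends in a final token iff $\CM$ accepts; (iii) appeal to Proposition~\ref{prop:cot-crasp-to-cot-smat} to transfer this to a CoT SMAT.

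The heart of the argument is step (ii): encoding the configuration of $\CM$ (control state plus the two counter values $c_1, c_2$) in the \emph{lengths of counts} that a C-RASP can recover. The natural encoding is positional: at CoT step $t$, the suffix of CoT tokens written so far should let $S$ recompute the current state and both counter values by counting occurrences of designated symbols. Concretely, I would have $\Gamma$ contain, for each control state $q$, a ``header'' token $h_q$ that is emitted once per simulated step, together with ``unit'' tokens $u_1, u_2$ whose running counts track $c_1$ and $c_2$. The subtlety is that counter \emph{decrements} must be realizable: raw prefix counts only increase, so instead I would track each counter as a difference of two running counts (a ``plus'' token and a ``minus'' token), exactly the kind of term $\LeftCount[\varphi_1] - \LeftCount[\varphi_2]$ that C-RASP subtraction provides. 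At each step, $S$ reads off the most recent header token to determine the current state, uses comparisons ($\LeftCount[\cdot] \sim c$, in particular zero-tests $c_i = 0$) to pick the applicable transition, and emits the tokens realizing the next state's header and the appropriate increment/decrement unit token; the initial segment $a^n$ seeds $c_1$ with $n$ via a count of the $Q_a$ predicate. When $\CM$ reaches its accepting state, $S$ emits a token from $\Gamma_F$; if $\CM$ diverges, $S$ produces an infinite CoT, which is allowed and correctly yields non-acceptance. A modest amount of care is needed so that the ``switch'' is unambiguous --- at each position exactly one output predicate $O_a$ should fire --- which one arranges by making the guards mutually exclusive (keyed on the last header token and the zero-tests), using the remark that sequence order then does not matter.

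The main obstacle I anticipate is the bookkeeping for decrements and zero-tests under the constraint that C-RASP terms are monotone prefix counts: one must verify that ``current value of counter $i$'' is always expressible as a fixed C-RASP term $\LeftCount[\text{``$u_i^+$ emitted''}] - \LeftCount[\text{``$u_i^-$ emitted''}]$ evaluated at the current position, that this term never goes negative during a faithful simulation (so C-RASP's truncated subtraction, if that is the convention, does no harm --- or, if subtraction is over $\Z$ as the semantics here suggest, that the invariant $c_i \ge 0$ is maintained), and that a decrement on an empty counter is detected \emph{before} it is attempted so the machine can branch on the zero-test instead. A secondary, more mechanical point is confirming that only finitely many predicates and counting terms are needed --- this holds because $\CM$ has finitely many states and each transition inspects only the current state and the zero-status of the counters, so the whole of $S$ is a fixed finite C-RASP program independent of $n$. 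Once the CoT C-RASP is in hand, Proposition~\ref{prop:cot-crasp-to-cot-smat} closes the argument immediately, and the same construction will be reused (with a richer input-encoding step) for the letter-bounded and permutation-invariant generalizations mentioned in the surrounding text.
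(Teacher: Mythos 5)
Your overall strategy is exactly the paper's: build a CoT C-RASP that simulates a Minsky counter machine, with the input $a^n$ seeding a counter via $\LeftCount[Q_a]$, counter values recovered as differences of prefix counts of emitted tokens, the current control state read off the most recent CoT token, divergence handled by an infinite CoT, and \cref{prop:cot-crasp-to-cot-smat} closing the argument. The paper's encoding is slightly cleaner --- it emits exactly one token per simulated step, namely the transition $\tau$ itself, and recovers counter $i$ as $\LeftCount[Q_{a_i}]+\sum_{\rho}\bu_\rho(i)\cdot\LeftCount[Q_\rho]$, which packages your ``plus/minus unit tokens'' into a single linear combination and avoids having to reason about intermediate positions where only some of a step's tokens have been emitted (a bookkeeping issue your multi-token-per-step scheme would need to address).

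There is, however, one concrete error: your reduction rests on the claim that every recursively enumerable set of naturals is accepted by a \emph{two}-counter machine started with $n$ placed directly in its first counter. This is false. Minsky's two-counter universality requires the input to be supplied in encoded form (e.g.\ as $2^n$); with the input given directly as $n$, two counters provably do not suffice --- Schroeppel showed a two-counter machine cannot compute $n\mapsto 2^n$, and Ibarra and Tr\^an showed such machines cannot even recognize sets like $\{2^n : n\ge 1\}$ or the primes. Since $\{a^p : p \text{ prime}\}$ is one of the very languages this theorem is meant to cover, your construction as stated would fail on it. The fix is exactly what the paper does in \cref{universality-cm}: allow three \emph{additional} working counters beyond the input counter(s), so that for unary input one simulates a $4$-counter machine with configuration $(q_0,n,0,0,0)$. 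Nothing else in your construction changes --- the C-RASP program is still finite and independent of $n$ --- so with that correction your argument matches the paper's proof.
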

The theorem follows from the following proposition \revised{and Proposition
\ref{prop:cot-crasp-to-cot-smat}.}
\begin{proposition}
    Each recursively enumerable language over a unary alphabet $\ialphabet = 
    \{a\}$ can be recognized by C-RASP in the CoT setting.
    \label{prop:unary}
\end{proposition}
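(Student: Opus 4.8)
The plan is to reduce to Minsky's two-counter machines, which are Turing-complete~\citep{minsky1961recursive}: for every recursively enumerable set $\gR \subseteq \N$ there is a deterministic two-counter machine $\CM$ which, started in its initial state with counter values $(c_1,c_2)=(n,0)$, halts if and only if $n \in \gR$. Identifying $a^n$ with $n$, the target language is $\Lang = \{a^n : n \in \gR\}$, which is precisely the set of inputs on which $\CM$ halts. It therefore suffices to construct a CoT C-RASP $S$ that, on input $w = a^n$, autoregressively emits --- one CoT token per step --- the sequence of configurations visited by the run of $\CM$ from $(n,0)$, and emits a distinguished accepting token $\mathtt{acc} \in \Gamma_F$ exactly when $\CM$ reaches its halting state. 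If $\CM$ diverges, $S$ generates an infinite CoT with no token in $\Gamma_F$, so $w \notin \Lang(S)$; hence $\Lang(S) = \Lang$, which together with \cref{prop:cot-crasp-to-cot-smat} yields \cref{th:unary}.

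The reason counting suffices is that the value of a counter at any point of the run is a \emph{signed count of past events}. Let each non-final CoT token be a pair $g = (q,\alpha)$ where $q$ ranges over the control states of $\CM$ and $\alpha \in \{\mathtt{start},\mathtt{inc}_1,\mathtt{dec}_1,\mathtt{nop}_1,\mathtt{inc}_2,\mathtt{dec}_2,\mathtt{nop}_2\}$ records the counter operation performed by the transition that produced this token; the token set $\Gamma$ is thus finite. Write $\Gamma^{(k)}_{+}$ (resp.\ $\Gamma^{(k)}_{-}$) for the tokens whose action is $\mathtt{inc}_k$ (resp.\ $\mathtt{dec}_k$). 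Since no input letter equals a CoT token, on any prefix $a^n U_1 \cdots U_t$ the term $\LeftCount[Q_a]$ evaluated at the last position always equals $n$, so we can define C-RASP count terms
\[
  I_1 = \LeftCount[Q_a] + \LeftCount[\bigvee_{b \in \Gamma^{(1)}_{+}} Q_b], \quad D_1 = \LeftCount[\bigvee_{b \in \Gamma^{(1)}_{-}} Q_b],
\]
and analogously $I_2, D_2$ but without the $\LeftCount[Q_a]$ summand (since $c_2$ starts at $0$). As long as the simulation never decrements a zero counter, the current value of $c_k$ equals $I_k - D_k$; consequently we never need subtraction, and the zero-test ``$c_k = 0$'' is the C-RASP formula $I_k = D_k$ while ``$c_k > 0$'' is $D_k < I_k$, both falling within the comparison grammar $t \sim t'$.

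It remains to assemble the defining formulas, noting that the previous CoT token is read off the last position by the atoms $Q_g$. For each token $g' = (q',\alpha')$, let $\varphi_{g'}$ be the disjunction, over all pairs $(g,Z)$ with $g=(q,\alpha)$ a token and $Z$ a conjunction of zero-tests such that the deterministic instruction of $\CM$ at state $q$ under test outcome $Z$ produces the token $g'$ next, of the formula $Q_g \wedge \tau_Z$, where $\tau_Z$ expresses $Z$ via the $I_k, D_k$ terms above; for the start token we add the extra disjunct $Q_a$, which holds at the last position precisely when no CoT token has yet been emitted, i.e.\ at the first step, where $(c_1,c_2) = (n,0)$. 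The accepting token is emitted by $\varphi_{\mathtt{acc}} = \bigvee_{q\text{ halting}} \bigvee_{\alpha} Q_{(q,\alpha)}$. Since $\CM$ is deterministic and the last token is unique, at most one $\varphi_{g'}$ holds at each step, so the order of the definition list is irrelevant and $S$ generates exactly the configuration sequence of $\CM$'s run on $n$; this is finite and ends in $\mathtt{acc} \in \Gamma_F$ iff $\CM$ halts iff $a^n \in \Lang$, as required. (The only excluded case is $n = 0$, i.e.\ the empty word, ruled out by the convention $w \in \ialphabet^+$; it is recovered with the start-of-string symbol $\SOS$, or by deciding ``$0 \in \gR$'' separately.)

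The main obstacle is not a single hard step but making the invariant ``value of $c_k$ equals $I_k - D_k$'' airtight: one must line up the action tag carried by each token with the counter effect of the transition that produced it, ensure the zero-test is consulted \emph{before} the matching decrement so that no counter ever becomes negative (otherwise $I_k - D_k$ is not the counter value), and verify mutual exclusivity of the $\varphi_{g'}$ so that the CoT C-RASP ``switch'' is well-defined at every position.
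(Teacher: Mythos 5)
Your C-RASP simulation itself is sound and is essentially the paper's construction: CoT tokens record which transition was taken, counter values are recovered as signed counts of past CoT tokens (with $\LeftCount[Q_a]$ supplying the initial value of the input counter), zero-tests become count comparisons, and determinism of the machine makes the switch well-defined. The gap is in the very first step, the appeal to Minsky. The statement you attribute to \citet{minsky1961recursive} --- that for every recursively enumerable set $\gR \subseteq \N$ there is a \emph{two}-counter machine which, started from $(c_1,c_2)=(n,0)$ with $n$ given \emph{directly} as a counter value, halts iff $n \in \gR$ --- is not Minsky's theorem and cannot be used in this form. Minsky's universality of two-counter machines requires the input to be presented in exponentially encoded form (e.g.\ as $2^n$ in a counter), because the simulation G\"odel-encodes several counters into a single one; and it is known (Schroeppel) that a two-counter machine cannot even compute $n \mapsto 2^n$ from a directly given $n$, so it cannot perform this re-encoding itself. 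Whether two-counter machines with direct, unencoded input recognize every r.e.\ subset of $\N$ is at best not a citable theorem, and your entire reduction rests on it.

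The fix is cheap and is exactly what the paper does: allow a fixed number of additional working counters. Counter automata that \emph{read} their input are Turing-complete with three counters (Fischer, Greibach), and converting ``read a letter'' into ``decrement an input counter'' yields, for a single input value, a four-counter machine started from $(n,0,0,0)$ that halts iff $n \in \gR$; this is \cref{universality-cm} with one input counter. Your simulation generalizes verbatim to any fixed number of counters --- introduce terms $I_k, D_k$ for $k=1,\dots,4$, with the $\LeftCount[Q_a]$ summand only in $I_1$ --- so nothing else in your argument needs to change.
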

In turn, this follows directly from the following \lcnamecref{lm:bounded}; recall that a language
$L \subseteq \ialphabet^+$ is \defn{letter-bounded} if it is a subset of
$a_1^*a_2^*\cdots a_n^*$ for some distinct letters $a_1,\ldots,a_n \in
\ialphabet$.
\begin{proposition}
    Each recursively enumerable letter-bounded language over any alphabet 
    $\ialphabet$ can be recognized by C-RASP in the CoT setting.
    \label{lm:bounded}
\end{proposition}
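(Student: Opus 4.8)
The plan is to simulate a Turing-complete model of computation by a CoT C-RASP, where the key constraint is that the input string $w \in a_1^* a_2^* \cdots a_n^*$ provides only $n$ counting quantities (the multiplicities $k_1, \dots, k_n$ of each letter), so the simulation must be driven by these counts together with the CoT tape the C-RASP writes for itself. The natural target is \emph{Minsky counter machines}: a machine with a finite control and finitely many counters that can be incremented, decremented, and tested for zero is Turing-complete (indeed two counters suffice). So I would fix a counter machine $\CM$ accepting the given r.e.\ letter-bounded language (more precisely, accepting the set of tuples $(k_1,\dots,k_n)$ such that $a_1^{k_1}\cdots a_n^{k_n} \in L$), and build a CoT C-RASP that reads the $k_i$ off the input and then runs $\CM$.

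The central device is to represent each configuration of $\CM$ — the current control state and the current counter values — by a single CoT token emitted at each time step, or, since a counter value can be unbounded, by using the \emph{position in the CoT stream} to hold counter contents in unary. Concretely, I would have the C-RASP emit, at each simulation step, a block of CoT tokens whose composition encodes the full current configuration: one "header" token naming the control state, followed by $c_1$ copies of a token $\sharp_1$, then $c_2$ copies of $\sharp_2$, etc., where $c_j$ is the value of counter $j$. Because C-RASP can count occurrences of any symbol in the prefix via $\LeftCount[\cdot]$, it can recover "the value of counter $j$ at the end of the most recent configuration block" as a counting term, by counting $\sharp_j$'s since the last header token (differences of prefix counts). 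The control state is read off directly from the most recent header token. With state and counter values in hand, the finite transition function of $\CM$ is just a Boolean case split (which token to emit next): emit the next configuration block, incrementing/decrementing the relevant counter by emitting one more or one fewer $\sharp_j$, and performing zero-tests by comparing the recovered count to the constant $0$. Acceptance is signalled by emitting a token from $\Gamma_F$ when $\CM$ reaches its accepting state.

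The one genuinely delicate point — and I expect this to be the main obstacle — is \emph{bootstrapping the input into the first configuration}. The counter machine's input is the tuple $(k_1,\dots,k_n)$, and initially $k_i$ is just "the number of $a_i$'s in $w$", which C-RASP can compute as $\LeftCount[Q_{a_i}]$ evaluated at the last input position. But once CoT tokens start being appended, the distinction between "input region" and "CoT region" must be maintained so that these counts are not contaminated; this is handled by making the CoT alphabet $\Gamma$ disjoint from $\ialphabet$ and always counting $a_i$'s over the whole prefix (they only ever occur in the input region) while counting $\sharp_j$'s only within the current block. A subtler sub-issue is that a letter-bounded language is not the same as a permutation-invariant one: C-RASP must also \emph{verify} that the input actually lies in $a_1^* \cdots a_n^*$ in the right order (e.g.\ no $a_1$ appears after an $a_2$), and reject otherwise; this is a standard C-RASP-expressible condition (for each $i<j$, there is no position carrying $a_j$ with a later position carrying $a_i$ — expressible since "some earlier position satisfies $\varphi$" is $\LeftCount[\varphi] > 0$, and one can similarly test suffixes via total-minus-prefix counts, or fold the check into the CoT by scanning).

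Putting it together, the steps in order are: (1) pick a counter machine $\CM$ with $L(\CM) = \{(k_1,\dots,k_n) : a_1^{k_1}\cdots a_n^{k_n}\in L\}$, valid since $L$ is r.e.; (2) design the CoT configuration-block encoding and give C-RASP terms that, reading the current prefix, extract the current control state and each counter value; (3) write the finite case split implementing one step of $\CM$ as a list of output definitions $O_a \defeq \varphi_a$, including the initialization step that converts $\LeftCount[Q_{a_i}]$ into the first block and the membership-in-$a_1^*\cdots a_n^*$ check; (4) set $\Gamma_F$ to contain an accept token emitted exactly when $\CM$ halts accepting, so the CoT is finite iff $w \in L$; (5) conclude by Proposition~\ref{prop:cot-crasp-to-cot-smat} that the language is also recognized by a CoT SMAT. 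The only arithmetic the C-RASP needs is addition, subtraction, and comparison to constants — all primitives of $\CLTL$ — so no nonlinear reasoning is required inside the construction itself, the counter machine supplies all the computational universality.
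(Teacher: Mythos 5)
Your overall strategy is the same as the paper's: reduce to the Parikh image (an r.e.\ subset of $\N^n$), simulate a Minsky counter machine whose input is supplied in its counters, initialize those counters from the letter counts $\LeftCount[Q_{a_i}]$, and separately enforce membership in $a_1^*\cdots a_n^*$ with a prefix-count condition. All of that matches the paper. The gap is in your central device, the configuration-block encoding. To drive the simulation you need the quantity ``number of $\sharp_j$'s since the last header token,'' and you assert this is a difference of prefix counts. But the subtrahend --- the value of $\LeftCount[Q_{\sharp_j}]$ \emph{at the position of the last header} --- is not a C-RASP term: every counting term $\LeftCount[\varphi]$ is evaluated at the current position and counts over the whole prefix, and inside a nested count all inner terms are evaluated at the inner position, so there is no way to compare a count taken at an earlier position with a count taken at the current position, nor to restrict counting to the suffix window delimited by the last header. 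The same obstacle already arises when \emph{emitting} a block: deciding whether to output another $\sharp_1$ or to move on to $\sharp_2$ requires knowing how many $\sharp_1$'s have been written in the current block so far, which is again a ``count since the last header'' quantity.

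The paper's construction avoids this entirely by never re-encoding counter values. Each CoT step emits exactly one token, namely the transition $\tau\in\transrel$ just executed, and the current value of counter $i$ is recovered as the \emph{global} linear combination $\LeftCount[Q_{a_i}]+\sum_{\rho\in\transrel}\bu_\rho(i)\cdot\LeftCount[Q_\rho]$, i.e., the initial value plus the accumulated effect of all transitions fired so far; every summand is a plain prefix count, so this is directly a C-RASP term, and the transition label simultaneously records the control state. If you replace your block encoding with this ``accumulate the update history'' idea (or, equivalently, with per-counter increment/decrement tokens whose prefix counts you subtract), the rest of your argument goes through and coincides with the paper's proof.
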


We will deduce \cref{lm:bounded} from the following
\lcnamecref{permutation-invariant}, which will be most convenient for our
construction. Given an alphabet $\ialphabet$ with
$\ialphabet=\{a_1,\ldots,a_n\}$, the corresponding \emph{Parikh map} is the map
$\Parikh\colon\ialphabet^*\to\N^n$, where $w\in\ialphabet^*$ is mapped to
$(|w|_{a_1},\ldots,|w|_{a_n})$, where $|w|_{a_i}$ is the number of occurrences
of $a_i$ in $w$. In other words, $\Parikh(w)$ is the vector that contains all
letter counts in $w$. Notice that for $u,v\in\ialphabet^*$, we have
$\Parikh(u)=\Parikh(v)$ if and only if $v$ can be obtained from $u$ by
re-arranging the letters, or by \emph{permuting $u$}.  We say that a language
$L\subseteq\ialphabet^*$ is \emph{permutation-invariant} if for any
$u,v\in\ialphabet^*$ with $\Parikh(u)=\Parikh(v)$, we have $u\in L$ if and only
if $v\in L$. In other words, membership in $L$ does not depend on the order in
which letters appear in a word.
\begin{proposition}
	\label{permutation-invariant}
	Each recursively enumerable permutation-invariant language over any
	alphabet $\ialphabet$ can be recognized by C-RASP in the CoT setting.
\end{proposition}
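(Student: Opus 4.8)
The plan is to harness C-RASP's prefix-counting ability to simulate a Minsky counter machine, writing the machine's successive configurations onto the CoT tape. Since $L$ is permutation-invariant, $w\in L$ iff $\Parikh(w)\in\Parikh(L)$, and $\Parikh(L)\subseteq\N^n$ is recursively enumerable (the Parikh map is computable, so the image of an r.e.\ set is r.e.). By the Turing-completeness of Minsky machines \citep{minsky1961recursive}, fix a deterministic counter machine $\CM$ with counters $c_1,\dots,c_m$ (for some $m\ge n$, with extra scratch counters as needed) whose instructions have the two standard forms — ``increment $c_k$, go to $q'$'' and ``if $c_k=0$ go to $q_0$, else decrement $c_k$ and go to $q_1$'' — such that, started in its initial state $q_{\mathrm{init}}$ with $c_k=v_k$ for $k\le n$ and $c_k=0$ for $k>n$, $\CM$ reaches an accepting state iff $(v_1,\dots,v_n)\in\Parikh(L)$, and otherwise either halts in a rejecting state or runs forever.

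\textbf{The CoT encoding.} Let $\Gamma$ consist of pairs $\langle q,\alpha\rangle$, where $q$ ranges over control states of $\CM$ and $\alpha$ over $\{\mathrm{noop}\}\cup\{\mathrm{inc}_k,\mathrm{dec}_k:1\le k\le m\}$ (the action just performed to arrive at $q$); put $\langle q,\alpha\rangle$ into $\Gamma_F$ when $q$ is accepting and into $\Gamma_{\neg F}$ otherwise. The simulation maintains the invariant that, on input $w\in\ialphabet^+$, after the CoT $U_1\cdots U_t$ has been generated with $U_t=\langle q,\alpha\rangle$, the machine is in state $q$ with
\[
    c_k \;=\; I_k \;+\; \#\{\,j\le t:\ U_j\text{ carries }\mathrm{inc}_k\,\}\;-\;\#\{\,j\le t:\ U_j\text{ carries }\mathrm{dec}_k\,\},
\]
where $I_k=|w|_{a_k}$ for $k\le n$ and $I_k=0$ for $k>n$. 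The crucial observation is that, since $\Gamma\cap\ialphabet=\emptyset$, each summand above is the value at the last position of the history $wU_1\cdots U_t$ of a $\LeftCount$ term: $I_k=\LeftCount[Q_{a_k}]$ (the CoT tokens contribute nothing), $\#\{j:U_j\text{ carries }\mathrm{inc}_k\}=\LeftCount[\varphi^{\mathrm{inc}}_k]$ with $\varphi^{\mathrm{inc}}_k=\bigvee\{Q_g:g=\langle\cdot,\mathrm{inc}_k\rangle\}$, and likewise $\varphi^{\mathrm{dec}}_k$. Hence ``$c_k=0$'' is expressible \emph{without any subtraction} as
\[
    Z_k\ :=\ \bigl(\LeftCount[Q_{a_k}]+\LeftCount[\varphi^{\mathrm{inc}}_k]\bigr)\ =\ \LeftCount[\varphi^{\mathrm{dec}}_k]
\]
(for $k>n$ drop the $\LeftCount[Q_{a_k}]$ summand), and ``$c_k>0$'' as the same with $>$ instead of $=$; soundness uses that the simulation decrements only after a zero test, so the invariant forces $c_k\ge 0$ throughout.

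\textbf{The C-RASP program.} For a control state $q$ set $\mathrm{St}_q:=\bigvee\{Q_g:g=\langle q,\cdot\rangle\}$, and for $q=q_{\mathrm{init}}$ additionally disjoin $\bigvee_{a\in\ialphabet}Q_a$, which holds at the last position exactly when no CoT token has been emitted yet (the first step); these $\mathrm{St}_q$ are pairwise disjoint at any history. Now assemble the definitions of $S$: for each state $q$, if $q$'s instruction is ``increment $c_k$, go to $q'$'' add the disjunct $\mathrm{St}_q$ to $\varphi_{\langle q',\mathrm{inc}_k\rangle}$; if it is ``if $c_k=0$ go to $q_0$, else decrement $c_k$ and go to $q_1$'' add $\mathrm{St}_q\wedge Z_k$ to $\varphi_{\langle q_0,\mathrm{noop}\rangle}$ and $\mathrm{St}_q\wedge\neg Z_k$ to $\varphi_{\langle q_1,\mathrm{dec}_k\rangle}$; let $\varphi_{a_{\mathrm{acc}}}:=\bigvee\{\mathrm{St}_q:q\text{ accepting}\}$ for the accept token $a_{\mathrm{acc}}\in\Gamma_F$; and add \emph{no} definition covering rejecting/halting states, so that at such a history $S$ outputs $\bot$ and $w$ is rejected. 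Since $\CM$ is deterministic and accepting/rejecting states have no outgoing instructions, at most one $\varphi_a$ holds at any history, so the order of the definitions in $S$ is irrelevant (cf.\ the remark following the definition of CoT C-RASP).

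\textbf{Correctness and the main obstacle.} An easy induction on $t$ shows that the unique CoT produced by $S$ on $w$ tracks the run of $\CM$ on $\Parikh(w)$ step by step, with the invariant above preserved; consequently this CoT is finite and ends in $\Gamma_F$ iff $\CM$ accepts $\Parikh(w)$ iff $\Parikh(w)\in\Parikh(L)$ iff $w\in L$, so $L(S)=L$ (if the run diverges, $S$ produces an infinite CoT and rightly does not accept). This proves \cref{permutation-invariant}; combined with \cref{prop:cot-crasp-to-cot-smat} it also gives a CoT SMAT, and it is the engine behind \cref{lm:bounded} and \cref{prop:unary}. The only genuine content — and the step worth spelling out carefully — is the encoding: recognizing that a $\LeftCount$ over the mixed input-plus-CoT history evaluates to ``initial letter count plus net CoT increments,'' which is exactly what lets a purely counting calculus both store and zero-test the counters (with comparisons substituting for the missing subtraction). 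The reduction to counter machines and the fact that permutation-invariance collapses membership to the Parikh vector are routine.
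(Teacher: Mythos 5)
Your proposal is correct and follows essentially the same route as the paper: reduce to the recursively enumerable Parikh image, simulate a deterministic counter machine initialized with the letter counts, use CoT tokens that record the transition/action taken at each step, and recover each counter's current value at the last position via $\LeftCount$ terms over the mixed input-plus-CoT history. The differences (state--action pairs instead of transitions as tokens, Minsky-style instructions instead of guarded $\Z^k$-updates, and rearranging the zero test to avoid subtraction) are cosmetic.
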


We prove \cref{permutation-invariant} by simulating counter machines. To define these,
we define $\Cons_k$ to the set of expressions $\varphi$ of the following form: 
a conjunction of counter tests of the form $x_i \sim
0$, where $x_i$ indicates the $i$th counter and $\sim\ \in \{>,=\}$. 
A \defn{$k$-counter machine ($k$-CM)} is a tuple
$
    (\states,\transrel,q_0,\finals),
$
where $\states$ is a set of states, 
$
    \transrel \subseteq \states \times \Cons_k \times \states \times \Z^k
$
is a finite set of \emph{transitions}, $q_0 \in \states$ is the \emph{initial state}, and
$\finals \subseteq \states$ is the set of final states. We also assume that the
machine is \emph{deterministic}, i.e., for any transitions $(p,\varphi,q,\bu)$ and $(p,\varphi',q',\bu')$ starting in the same state $p$, but with $(q,\bu)\ne (q',\bu')$, the expressions $\varphi$ and $\varphi'$ cannot hold at the same time (i.e. $\varphi
\wedge \varphi'$ is unsatisfiable). For a transition $\tau=(p,\varphi,q,\bu)$, we will use the notation $\src(\tau):=p$, $\tgt(\tau):=q$, $\varphi_\tau:=\varphi$, and $\bu_\tau:=\bu$.

A \emph{configuration} of such a $k$-CM is a tuple $(q,\bx)\in \states\times\Z^k$,
where $q\in \states$ and $\bx\in\Z^k$.  For configurations $(p,\bx),(q,\by)\in
\states\times\Z^k$, we write $(p,\bx)\to(q,\by)$ if there is a transition
$\tau\in\transrel$ with $\src(\tau)=p$, $\tgt(\tau)=q$, $\varphi_\tau(\bx)$ is
true, and $\by=\bx+\bu_\tau$.  By $\xrightarrow{*}$, we denote the reflexive
transitive closure of the relation $\to$ on the configurations. A configuration
$(q,\bx)$ is \emph{initial} if $q=q_0$. We say that an initial configuration
$(q_0,\bx)$ is \emph{accepted} if $(q_0,\bx)\xrightarrow{*}(p,\by)$ for some $\by\in\Z^k$ and $p\in
\finals$. In other words, if there exists a run of the $k$-CM that eventually arrives
in a final state.

We will employ the following variant of the fact that counter machines are Turing-complete. Note that if one uses CM as language acceptors, with input-reading transitions, then just two counters are sufficient for Turing-completeness. In our construction, it will be most convenient to provide the input of the CM at its counters. In this setting, it is known that three additional counters (aside from the input counters) are sufficient for Turing-completeness:
\begin{lemma}\label{universality-cm}
	For every recursively enumerable set $S\subseteq\N^n$, there is a
	$(n+3)$-CM so that for every $\bx\in\N^n$, the configuration
	$(q_0,\bx,0,0,0)$ is accepted if and only if $\bx\in S$.
\end{lemma}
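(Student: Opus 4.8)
The plan is to reduce to the classical Turing-completeness of Minsky counter machines, preceded by a preprocessing phase that folds the $n$ input counters into a single number via a Gödel encoding. Let $p_1 < p_2 < \cdots$ be the primes and write $\langle \bx \rangle := \prod_{i=1}^n p_i^{x_i}$. Since $S \subseteq \N^n$ is recursively enumerable, so is $S' := \{\langle \bx \rangle : \bx \in S\} \subseteq \N$, so we may fix a Turing machine $M$ that accepts exactly $S'$. By Minsky's theorem, there is a deterministic counter machine $N$ using exactly three counters such that, started in the configuration $(q_0^N, u, 0, 0)$, $N$ reaches a final state if and only if $u \in S'$: one first builds $2^u$ out of $u$ (this consumes $u$ and needs a result counter plus one scratch counter, which is exactly what the two spare counters provide), and then runs Minsky's two-counter simulation of $M$ on two of the three counters. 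Thus $N$ solves the single-argument problem, given its input \emph{plainly} on its first counter.

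It remains to feed $N$ the number $\langle \bx \rangle$. I would build the desired $(n+3)$-CM $\CM$ with counters $x_1, \ldots, x_n$ (the input counters) and $a, b, c$ (the work counters, all initially $0$), running in two phases. \emph{Folding phase:} set $a := 1$; then for $i = 1, \ldots, n$, repeatedly decrement $x_i$ and, each time, multiply $a$ by the constant $p_i$ using $b$ as scratch (drain $a$ into $b$ while adding $p_i$ per unit, then drain $b$ back into $a$), leaving $c$ untouched; at the end $a = \langle \bx \rangle$ and $x_1 = \cdots = x_n = b = c = 0$. \emph{Simulation phase:} run $N$ verbatim with $a, b, c$ playing the roles of $N$'s three counters, and let the final states of $\CM$ be exactly those arising from final states of $N$. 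Then the unique deterministic run of $\CM$ from $(q_0, \bx, 0, 0, 0)$ reaches a final state iff the run of $N$ from $(q_0^N, \langle \bx \rangle, 0, 0)$ does, iff $\langle \bx \rangle \in S'$, iff $\bx \in S$.

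Finally I would verify that $\CM$ is a legal CM in the paper's sense. Every gadget loop ``while $y > 0$'' compiles to a state with two outgoing transitions guarded by the mutually exclusive tests $y > 0$ and $y = 0$, and the rest is sequential composition, so $\CM$ is deterministic (the simulation phase inheriting determinism from $N$); all guards are conjunctions of tests $y \sim 0$ with $\sim\ \in \{>, =\}$, as permitted by $\Cons_{n+3}$; and every update is a fixed integer vector (an increment by $p_i$ can be taken as a single vector step or unfolded into $p_i$ unit steps). All counter values stay non-negative along the run, so the extra generality of $\Z^{n+3}$ updates is not even needed. The main obstacle --- and the only non-mechanical point --- is the counter accounting: one must arrange the folding multiplications and the subsequent simulation so that at most three auxiliary counters are ever live simultaneously, which is exactly what forces the ``drain all inputs into one accumulator, then hand that accumulator to the three-counter simulator'' structure above. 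We make no attempt to optimize the constant $3$.
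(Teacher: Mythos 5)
Your proof is correct, but it takes a genuinely different route from the paper's. The paper observes that the language $L=\{a_1^{x_1}\cdots a_n^{x_n}\mid \bx\in S\}$ is recursively enumerable, invokes the classical fact that three-counter machines \emph{as language acceptors} recognize every recursively enumerable language (Minsky, Fischer, Greibach), and then turns that acceptor into an $(n+3)$-CM by the one-line translation ``reading a letter $a_i$ becomes decrementing input counter $i$,'' with the acceptor's three work counters becoming counters $n+1,n+2,n+3$; no preprocessing of the input counters is needed. You instead stay entirely in the numeric world: you fold the $n$ input counters into a single G\"odel number $\prod_i p_i^{x_i}$ held in one work counter, and then hand that number to a three-counter decider for the recursively enumerable set of encodings, using the single-argument form of Minsky's theorem (a $3$-CM with its input given plainly on one counter accepts any r.e.\ subset of $\N$, by first computing $2^u$ and then running the two-counter simulation). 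Both arguments are sound and rest on the same classical underpinnings. The paper's version is shorter because the input counters are consumed directly by the simulation; yours is more self-contained and explicitly constructive, at the cost of the folding phase, whose counter accounting you correctly identify as the only delicate point. Two minor remarks: the equivalence $\prod_i p_i^{x_i}\in S'\iff\bx\in S$ relies on the injectivity of the encoding (immediate from unique factorization, but worth stating), and your observation that all guards are conjunctions of tests $y>0$ or $y=0$ and that the loop branches are mutually exclusive is exactly what is needed to meet the paper's determinism requirement.
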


	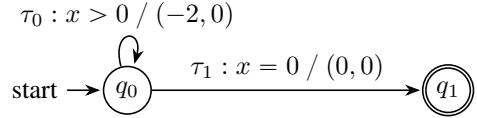
\begin{wrapfigure}{R}{0.5\textwidth}
		\revised{
        \begin{center}
    \begin{tikzpicture}[
  ->, >=Stealth, shorten >=1pt, auto, semithick,
  node distance=36mm,
  every state/.style={minimum size=18pt, inner sep=2pt}
]
  \node[state, initial]                  (q0) {$q_0$};
  \node[state, accepting, right=of q0]  (q1) {$q_1$};

  \path
    (q0) edge[loop above] node {$\tau_0: x>0 \;/\; (-2,0)$} (q0)
    (q0) edge node {$\tau_1: x=0 \;/\; (0,0)$} (q1);
\end{tikzpicture}
        \end{center}
        \caption{2-CM with transition labels $\tau_i$. \label{fig:cm-unary}}
		}
    \end{wrapfigure}
This is a direct consequence of CM, as language acceptors are able to
recognize all recursively enumerable languages (this is implicit
in~\cite[Theorem Ia]{minsky1961recursive}, and explicit in \cite[Theorem
3.1]{fischer1968counter}) and that $k$-CM accept the same languages as
$3$-CM~\cite[Theorem 2.4]{DBLP:journals/tcs/Greibach76}. Moreover, if
$S\subseteq\N^n$ is recursively enumerable, then the language
$L:=\{a_1^{x_1}\cdots a_n^{x_n}\mid (x_1,\ldots,x_n)\in S\}$ is a recursively
enumerable language, and so there exists a three-counter machine $M$ that
recognizes $L$. This three-counter machine can easily be turned into a
$(n+3)$-CM as we need it: whenever $M$ reads a letter $a_i$, our CM will
decrement the $i$-th counter; and when $M$ uses counter $j\in\{1,2,3\}$, then
our CM will use counter $n+j$.

\begin{corollary}\label{bounded-cm}
	For every recursively enumerable permutation-invariant language
	$L\subseteq \ialphabet^+$, there is a $(n+3)$-CM so that for every
	$w\in \ialphabet^+$, we have $w\in L$ if and only if
	$(q_0,\Parikh(w),0,0,0)$ is accepted.
\end{corollary}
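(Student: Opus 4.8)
The plan is to derive \Cref{bounded-cm} directly from \Cref{universality-cm} by passing to Parikh images. Write $\ialphabet = \{a_1,\ldots,a_n\}$ and, given a recursively enumerable permutation-invariant language $L \subseteq \ialphabet^+$, define $S := \Parikh(L) = \{\Parikh(w) : w \in L\} \subseteq \N^n$. The goal is to establish (a) $S$ is recursively enumerable, and (b) for every $w \in \ialphabet^+$, $w \in L$ if and only if $\Parikh(w) \in S$. Granting these, \Cref{universality-cm} hands us an $(n+3)$-CM with $(q_0,\bx,0,0,0)$ accepted iff $\bx \in S$ for every $\bx \in \N^n$, and instantiating $\bx := \Parikh(w)$ together with (b) gives precisely the statement of the corollary.

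For (a), I would note that $S$ is the image of the recursively enumerable set $L$ under the computable map $\Parikh$, and such images are recursively enumerable (dovetail an enumeration of $L$ and apply $\Parikh$). A slightly more self-contained route, which also previews the role of permutation-invariance: to semi-decide $\bx \in S$, form the canonical sorted word $w_\bx := a_1^{x_1}\cdots a_n^{x_n}$ with $\Parikh(w_\bx) = \bx$; since any word with Parikh image $\bx$ is a permutation of $w_\bx$, permutation-invariance yields $\bx \in S \iff w_\bx \in L$, and the latter is semi-decidable because $L$ is r.e. Either way, $S \subseteq \N^n$ is recursively enumerable (and is contained in $\N^n \setminus \{\vzero\}$ since $L$ has no empty word, though this is immaterial for the corollary).

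For (b), one direction is definitional: $w \in L$ immediately gives $\Parikh(w) \in S$. For the converse, if $\Parikh(w) \in S$ then $\Parikh(w) = \Parikh(v)$ for some $v \in L$, so $w$ is obtained from $v$ by permuting letters, and permutation-invariance of $L$ yields $w \in L$. This is the only place where the permutation-invariance hypothesis is genuinely used, and it is exactly what makes the reduction to a statement about letter counts faithful in both directions.

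I do not foresee a real obstacle here: the corollary is essentially a translation layer between languages over $\ialphabet$ and subsets of $\N^n$, resting on the elementary facts that computable images preserve recursive enumerability and that a permutation-invariant language is determined by its Parikh image. The one point worth stating carefully is step (b), since without permutation-invariance the containment $\Parikh(w) \in S$ would only certify that \emph{some} rearrangement of $w$ lies in $L$, not $w$ itself.
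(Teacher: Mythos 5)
Your proposal is correct and matches the paper's own proof: both derive the corollary from \cref{universality-cm} by setting $S=\Parikh(L)$, noting that $\Parikh(L)$ is recursively enumerable as the computable image of an r.e.\ set, and using permutation-invariance to get $w\in L \iff \Parikh(w)\in\Parikh(L)$. You have simply spelled out the details the paper leaves implicit.
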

\vspace{-0.5cm}
\begin{proof}
	Follows from \cref{universality-cm}: For a recursively
	enumerable $L\subseteq \ialphabet^+$, the Parikh image $\Parikh(L)$ is
	recursively enumerable; since $L$ is permutation-invariant, we have
	$w\in L$ iff $\Parikh(w)\in \Parikh(L)$.
\end{proof}

\newcommand{\upara}[1]{\vspace{-0.2cm}\paragraph{#1}}
\vspace{-0.3cm}
\begin{proof}[Proof of \cref{permutation-invariant}]
	Let $\ialphabet=\{a_1,\ldots,a_n\}$ and take 
    a permutation-invariant recursively enumerable language
    $L \subseteq \ialphabet^*$. From \cref{bounded-cm}, we get a 
    $(n+3)$-CM such that from the configuration
	$C_0 := (q_0,x_1,\ldots,x_n,0,0,0)$, the CM will reach $\finals$ if and only if $a_1^{x_1}\cdots a_n^{x_n}\in L$. 

	We define the set $\Gamma$ of CoT
    tokens to be $\ialphabet$ unioned with the transition relation 
    $\transrel$. Note that the C-RASP is
    going to be evaluated at the last position on input
        $wv$
    where $v \in \Gamma^*$. 
    The construction of the C-RASP CoT transformer considers the following
    cases.
    
    \upara{Initial step.} At the beginning, the last symbol in the input to
    the C-RASP is in $\ialphabet$. This indicates that the CM is
    in the initial state $q_0$. We add the following rules to our CoT C-RASP
    expression $S$
    \[
        O_{\tau} \defeq 
        \varphi(\LeftCount[Q_{a_1}],\ldots,\LeftCount[Q_{a_n}])
        \wedge Q_a,
    \]
    for each $a \in \ialphabet$ and each transition
    $\tau = (q_0,\varphi,q',\bu)\in\transrel$. The order in which the rules are
    added is not important since the counter machine is deterministic.
%

    \upara{Non-initial step.} After an initial step, the last symbol in the
    input is always a transition of the CM, which indicates which state the CM 
    is in. We add the following rules to our
    CoT C-RASP expression $S$ (in no particular order):
    \[
        O_{\tau'} \defeq
	    \varphi_{\tau'}(t_1,\ldots,t_{n+3}) \wedge
            Q_{\tau},
    \]
	for any $\tau,\tau'\in\transrel$ with $\tgt(\tau)=\src(\tau')$.
    Here, $t_1,\ldots,t_{n+3}$ are the count-valued C-RASP terms
    \newcommand{\tmpadd}{\LeftCount[Q_{a_i}]+~}
    \begin{align}
	    t_i &= \tmpadd \sum_{\rho\in\transrel} \bu_{\rho}(i) \cdot\LeftCount[Q_{\rho}] && \text{for $i=1,\ldots,n$}\label{unary-terms-a}\\
	    t_i &= \phantom{\tmpadd}\sum_{\rho\in\transrel} \bu_\rho(i)\cdot\LeftCount[Q_{\rho}] && \text{for $i=n+1,n+2,n+3$}. \label{unary-terms-b}
    \end{align}
    Intuitively, each $\Sem{t_i}{w}$ will tell us the value of the $i$th
    counter. For $i=1,\ldots,n$, we have the additional summand $\LeftCount[Q_{a_i}]$
    because this is the initial value of the $i$th counter, according to 
    \cref{universality-cm}.

    \upara{Output symbols.} The desired output symbols for acceptance are 
    any $\tau\in\transrel$ for which $\tgt(\tau)\in \finals$.

    \upara{Correctness.} The C-RASP directly simulates the
    CM, so correctness is immediate.\qedhere
\end{proof}


Finally, \cref{lm:bounded} follows easily from \cref{permutation-invariant}: We
can modify our C-RASP to check (e.g.\ in each step) that the (initial)
input word belongs to $a_1^*\cdots a_n^*$. See \cref{app:unary} for the proof.

\revised{
\begin{example}\label{ex:parity}
    In this example, we illustrate the construction of CoT C-RASP for parity
    (i.e. $\{ w \in \{a,b\}^+ : |w|_a \equiv_2 0\}$), which is a permutation
    invariant language. Note that this was proven not to be expressible
    in C-RASP without CoT \cite{len-gen-huang}. We start with the
    the two 2-counter machine as depicted in Figure \ref{fig:cm-unary}.
    To make the illustration simpler, we have opted
    to use only 2 counters (which are sufficient for this language), instead
    of 5 counters. 
    The counter machine starts at $(q_0,x,y)$, where $x$ records the number of
    $a$'s and $y$ the number of $b$'s. It reduces $x$ by $2$ until $x$ becomes zero, at which point it accepts by moving to $q_1$. 

    We now specify the C-RASP rules for the counter machine. We use
    $c$ as an arbitrary letter in $\{a,b\}$. We start with initial step, 
    corresponding to the first transition taken by the counter machine:
    \begin{align}
	    &O_{\tau_0} \defeq \LeftCount[Q_a] > 0 \wedge Q_{c} && O_{\tau_1} \defeq \LeftCount[Q_a] = 0 \wedge Q_{c} \label{unary-example-initial}
    \end{align}
    Note that, for our language, acceptance is only possible when the input is 
    nonempty, i.e., the last symbol at the initial step is some $c \in \{a,b\}$.
    The C-RASP for the non-initial steps are as follows:
    \begin{align}
	    &O_{\tau_0} \defeq \LeftCount[Q_a]-2\cdot\LeftCount[Q_{\tau_0}] > 0 \wedge 
				Q_{\tau_0} &&
        O_{\tau_1} \defeq \LeftCount[Q_a]-2\cdot\LeftCount[Q_{\tau_0}] = 0 \wedge     
                                Q_{\tau_0}
				\label{unary-example-non-initial}
    \end{align}
\end{example}
}

\OMIT{
The following proposition can be proven in a similar fashion as 
\cref{prop:unary}.
\begin{proposition}
    Each recursively enumerable permutation-invariant language can be
    recognized by C-RASP in the CoT setting.
    \label{prop:permutation}
\end{proposition}
}

\section{General Case}
\label{sec:general}

Given that \cref{permutation-invariant,lm:bounded} show that for letter-bounded
or permutation-invariant languages, CoT C-RASP are Turing-complete, this raises
the question of whether they are even Turing-complete for a general language $L\subseteq\ialphabet^*$. The following shows that they are not:
\begin{proposition}
    C-RASP in the CoT setting is not Turing-complete over $\ialphabet = \{a,b\}$.
\end{proposition}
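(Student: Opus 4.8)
The plan is to name a concrete \emph{recursive} language over $\ialphabet=\{a,b\}$ and show that no CoT C-RASP recognizes it; the language of palindromes $\mathrm{PAL}=\{w\in\{a,b\}^*: w=w^R\}$ (or, equally well, the copy language $\{ww: w\in\{a,b\}^*\}$) is the natural target, following the remark in \cref{sec:intro}. Since $\mathrm{PAL}$ is decidable, hence recursively enumerable, showing it is not CoT-C-RASP-recognizable contradicts Turing-completeness over $\{a,b\}$.

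The technical heart is a lemma saying that a CoT C-RASP extracts only a bounded amount of numeric information from its input and that the CoT phase adds no new \emph{kind} of information, crucially because $\Gamma$ is finite. Let $\LeftCount[\psi_1],\dots,\LeftCount[\psi_D]$ enumerate the $\LeftCount$-subterms of a given CoT C-RASP $S$ (over $\ialphabet\cup\Gamma$), and for a string $s$ set $V_S(s):=(\Sem{\LeftCount[\psi_1]}{s}(|s|),\dots,\Sem{\LeftCount[\psi_D]}{s}(|s|))\in\N^D$. I would first observe that $V_S$ is computed by a single left-to-right sweep: there is a fixed $\delta:\N^D\times(\ialphabet\cup\Gamma)\to\N^D$ with $V_S(s\sigma)=\delta(V_S(s),\sigma)$ — one evaluates the $\LeftCount$-subterms at the new last position in order of increasing subterm size, each incremented by $0$ or $1$ according to whether $\psi_k$ holds there, which is decided by the already-computed values of smaller count-subterms at that position together with whether the new symbol equals a given letter. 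Hence $V_S(uv)=\delta^*(V_S(u),v)$ depends on $u$ only through $V_S(u)$. Second, the value of \emph{any} C-RASP subformula of $S$ evaluated at the \emph{last} position of a string $s$ is a function of $(V_S(s),s_{|s|})$ alone: comparisons at that position reduce to coordinates of $V_S(s)$, and the only letter predicates relevant there concern $s_{|s|}$. Putting these together, the CoT produced by $S$ on input $w$ — obtained by repeatedly appending the symbol $U_k$, which is a fixed function of $(V_S(wU_1\cdots U_{k-1}),\text{last symbol})$, while $V_S(wU_1\cdots U_k)=\delta(V_S(wU_1\cdots U_{k-1}),U_k)$ — is a deterministic function of the single pair $(V_S(w),w_{|w|})$; in particular so is whether $S$ accepts $w$.

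Given the lemma, the remainder is a pigeonhole-plus-fooling-set argument. Suppose $L(S)=\mathrm{PAL}$ and let $D$ be as above. Pick $m$ large enough that $2^m>(m+1)^D$. Since $V_S$ maps the $2^m$ words of $\{a,b\}^m$ into $\{0,\dots,m\}^D$, there are distinct $u\ne u'$ in $\{a,b\}^m$ with $V_S(u)=V_S(u')$. Then for every $v\in\{a,b\}^*$ we get $V_S(uv)=\delta^*(V_S(u),v)=\delta^*(V_S(u'),v)=V_S(u'v)$, and $uv$ and $u'v$ end in the same symbol, so $S$ accepts $uv$ iff it accepts $u'v$. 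Taking $v=u^R$ yields $uu^R\in\mathrm{PAL}=L(S)$, hence $u'u^R\in L(S)=\mathrm{PAL}$; but $u'u^R$ is a palindrome only when $u=u'$ (compare the two halves), a contradiction. (For the copy language one takes $v=u$ instead, reaching the same contradiction.)

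I expect the only real work to be the lemma — specifically, making rigorous that evaluating a deeply nested C-RASP formula at the last position never needs anything beyond the current $\LeftCount$-values and the current symbol (this is where the stratified, innermost-first evaluation of nested $\LeftCount$-terms matters), and carefully tracking the autoregressive update during the CoT; the combinatorial core is elementary. It is worth stating the lemma in reusable form — ``for every CoT C-RASP there is a constant $D$ such that acceptance restricted to $\ialphabet^n$ factors through a map $\ialphabet^n\to\{0,\dots,n\}^D\times\ialphabet$'' — which makes the contrast with the letter-bounded case transparent: there the Parikh vector (one coordinate per letter) already determines the input word among $a_1^*\cdots a_n^*$, whereas over arbitrary $\{a,b\}^*$ the set $\{0,\dots,n\}^D$ is exponentially too small to separate all inputs.
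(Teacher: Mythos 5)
Your proof is correct, and it takes a genuinely different route from the paper's. The paper derives the proposition from a general lemma (its Lemma 4.2): for any language recognized by CoT C-RASP, the restriction to inputs of length $\leq n$ is recognized by an automaton of size polynomial in $n$. That lemma is in turn obtained by passing through the simulation of C-RASP by limit transformers and invoking the $O(\log k)$ communication-complexity bound for limit transformers (Theorem 12 of \cite{len-gen-huang}), after which PALINDROME is ruled out because it requires exponential-size automata. Your argument stays entirely at the level of C-RASP semantics: you show directly that the tuple $V_S(s)$ of $\LeftCount$-subterm values at the last position evolves by a fixed transition function $\delta$ under appending a symbol (using the stratified, innermost-first evaluation of nested counts), that every subformula's value at the last position is a function of $(V_S(s), s_{|s|})$, and that the autoregressive CoT trajectory is therefore determined by this pair. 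Your ``reusable lemma'' --- acceptance on $\Sigma^n$ factors through $\Sigma^n \to \{0,\dots,n\}^D \times \Sigma$ --- is essentially a concrete instantiation of the paper's polynomial-automaton lemma ($\delta$ on the state set $\{0,\dots,n\}^D\times\Sigma$ \emph{is} such an automaton), and your pigeonhole-plus-fooling-set step with $v=u^R$ is the explicit version of what the paper leaves implicit in ``(e.g.\ take PALINDROME).'' What each approach buys: yours is self-contained and elementary, requiring none of the limit-transformer machinery; the paper's is more general, since the communication-complexity bound applies to anything simulable by limit transformers (hence to CoT SMAT, not only to CoT C-RASP), which matters elsewhere in the paper but not for this particular proposition. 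One small point worth making explicit if you write this up: the two colliding words $u\neq u'$ need not end in the same symbol, but this is harmless because $\delta(V,\sigma)$ depends only on the previous count vector and the newly appended symbol, so agreement of $V_S(u)=V_S(u')$ alone propagates through any nonempty continuation $v$.
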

This follows from the following lemma (e.g. take PALINDROME).
\begin{lemma}\label{lemma:polynomial-automaton}
    If a language $L$ is recognized by CoT C-RASP, then for each $n$ the
    restriction $L_n \subseteq L$ to all inputs of length $\leq n$ is recognized
    by an automaton of size polynomial in $n$.
\end{lemma}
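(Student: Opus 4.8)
The plan is to show that a CoT C-RASP running on inputs of length $\leq n$ can only pass through polynomially many distinct "states", where a state records the values of all count-valued subterms appearing in the program. First I would fix a CoT C-RASP $S = d_1,\ldots,d_l$ recognizing $L$, with CoT alphabet $\Gamma$. Let $t_1,\ldots,t_m$ be the finitely many count-valued subterms occurring anywhere in $S$ (including those nested inside the Boolean subformulas), and let $\varphi_1,\ldots,\varphi_r$ be the finitely many Boolean subformulas. The key observation is that on an input of total length $N$ (original word plus CoT generated so far), each $\Sem{t_j}{w}(N) = \overleftarrow{\#}[\psi]$-style count is bounded in absolute value by $C \cdot N$ for a constant $C$ depending only on $S$ (by structural induction: a prefix-count is $\leq N$, constants are fixed, and addition of two such terms at most doubles the bound). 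Hence the tuple $\big(\Sem{t_1}{w}(N),\ldots,\Sem{t_m}{w}(N)\big)$ lives in a set of size $(2CN+1)^m$, polynomial in $N$ for fixed $m$.

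Next I would argue that the CoT run terminates within polynomially many steps, or else loops and never accepts. Since $S$ is deterministic, the next CoT token $S(wv)$ is a function of the current count-tuple and the last symbol (only the comparisons $t \sim t'$ and the $Q_a$ predicate at the final position matter for evaluating the $\varphi_{a_i}$ at position $|wv|$). Moreover — and this is the crucial point — the \emph{update} from the count-tuple at step $k$ to the count-tuple at step $k+1$ is determined by the same finite data: appending one more symbol $\sigma$ increments exactly those prefix-counts whose inner formula holds at the new position, and whether $\psi$ holds at the new position is again a function of the old count-tuple and $\sigma$. So the sequence of (count-tuple, last-symbol) pairs is generated by a deterministic function on a state space of size polynomial in the current length. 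If the run on $w$ (with $|w| \leq n$) produces a CoT of length exceeding this polynomial bound without accepting, it must revisit a (count-tuple, last-symbol) pair, and by determinism it loops forever and never reaches $\Gamma_F$; hence $w \notin L$. Therefore, for $w \in L_n$, the accepting CoT has length bounded by some polynomial $p(n)$.

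Finally I would assemble the automaton. Its state set is the set of all (count-tuple, last-symbol) pairs reachable within $p(n) + n$ steps, together with a dead state and an accept state; this set has size polynomial in $n$. On reading the input letters $w_1 \cdots w_{|w|}$ it simulates $S$ reading that prefix (tracking the count-tuple), then deterministically "runs the CoT to completion" as $\epsilon$-transitions — or, to stay within a standard finite automaton, I would instead note that the entire computation on $w$ is a function of $w$ alone, so the automaton can be taken to have states indexed by count-tuples after reading a prefix, with a single post-processing step; alternatively, allow an NFA with $\epsilon$-moves and determinize, still polynomial since the simulation is deterministic. Accepting states are those from which the CoT-completion ends in $\Gamma_F$ within the bound. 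Correctness is immediate from the fact that we are literally simulating $S$.

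The main obstacle I anticipate is the second step: bounding the CoT length. One must rule out that $S$ takes a very long (but eventually accepting) detour whose length is superpolynomial in $n$ — this is exactly why the pigeonhole argument on the polynomial-size state space (count-tuple, last-symbol) is essential, and why it matters that the count-tuple update, not just the output, depends only on that bounded state. A secondary subtlety is handling the C-RASP[RPEs] case if one wants it, but the lemma as stated is for plain CoT C-RASP, so I would not address RPEs. The bookkeeping of converting the deterministic-simulation-with-$\epsilon$-moves into an honest polynomial-size automaton is routine but should be stated carefully.
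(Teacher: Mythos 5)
Your approach is genuinely different from the paper's: the paper derives the lemma as a corollary of the $\mathcal{O}(\log k)$ communication complexity of Limit Transformers (Theorem 12 of \citet{len-gen-huang}), counting the polynomially many distinct messages Alice can send about a prefix and observing that the induced equivalence refines the Nerode congruence. You instead simulate the C-RASP directly by tracking the tuple of all count-valued subterms. Unfortunately, your second step contains a genuine error. You claim that an accepting CoT on an input of length $\leq n$ has length bounded by a polynomial $p(n)$, via pigeonhole on the (count-tuple, last-symbol) state space. But that state space is \emph{not} of size polynomial in $n$ during the CoT phase: after $k$ CoT steps the total length is $N = |w|+k$, so the counts range over $[-C(n+k), C(n+k)]$, and the number of available states grows with $k$ at least as fast as the number of steps taken. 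No repetition is forced, and indeed none can be: a polynomial (or any computable) bound on accepting CoT length would contradict \cref{th:unary,prop:unary}, since CoT C-RASP recognizes all recursively enumerable unary languages precisely because the simulated counter machine may run for an uncomputably long time before accepting. The sentence ``Accepting states are those from which the CoT-completion ends in $\Gamma_F$ within the bound'' therefore defines the wrong acceptance set.

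The good news is that the step you flagged as ``essential'' is actually unnecessary, and the rest of your argument can be salvaged into a correct proof. The automaton only ever reads the input $w$ with $|w| \leq n$; it never reads the CoT. Your own observations show that (i) the (count-tuple, last-symbol) pairs reachable on input prefixes of length $\leq n$ number only polynomially many, since there the counts are bounded by $Cn$, and (ii) the entire subsequent CoT --- and hence acceptance --- is a deterministic function of that pair, by structural induction on the update rule you describe. So take these pairs as states, with transitions given by the one-letter update, and declare a state accepting iff the deterministic CoT continuation from it \emph{eventually} (with no time bound) emits a token of $\Gamma_F$. This set of accepting states is not computable, but the lemma only asserts the \emph{existence} of a small automaton, so that is harmless. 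With that repair your argument works and is arguably more elementary and self-contained than the paper's, at the cost of being specific to C-RASP's counting structure, whereas the communication-complexity route applies uniformly to anything expressible as a Limit Transformer.
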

This is an immediate corollary of the logarithmic communication complexity of Limit Transformers and hence C-RASP (Theorem 12 in \citet{len-gen-huang}). See Appendix~\ref{app:general} for details.
However, we will show that with relative positional encodings,
CoT C-RASP are in fact fully Turing-complete:
\begin{theorem}\label{relation-full-completeness}
	Every recursively enumerable language over an arbitrary alphabet
    $\ialphabet$ can be recognized by C-RASP[RPEs] in the CoT setting 
    \revised{and, thus, can be recognized by CoT SMAT[RPEs]}.
\end{theorem}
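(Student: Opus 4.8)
The plan is to reduce \cref{relation-full-completeness} to the counter-machine construction of \cref{sec:unary}: I would use the relative positional encoding to fold the arbitrary input word $w$ into a single natural number $\mathrm{enc}(w)$ and then run (essentially verbatim) the simulation behind \cref{permutation-invariant} on $\mathrm{enc}(w)$. Concretely, fix $\ialphabet=\{a_1,\dots,a_k\}$, an RE language $L\subseteq\ialphabet^*$, and the bijective base-$k$ numeration $\mathrm{enc}\colon\ialphabet^*\to\N$, $\mathrm{enc}(a_{j_1}\cdots a_{j_m})=\sum_{\ell=1}^m j_\ell\,k^{m-\ell}$, which is a computable bijection; then $\mathrm{enc}(L)\subseteq\N$ is recursively enumerable, so by \cref{universality-cm} (with $n=1$) there is a $4$-CM $M$ with $(q_0,N,0,0,0)$ accepted iff $N\in\mathrm{enc}(L)$, i.e.\ iff $w\in L$ for $N=\mathrm{enc}(w)$. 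It thus suffices to build a CoT C-RASP[RPEs] that, on $w$, drives its simulated counters to $(q_0,\mathrm{enc}(w),0,0,0)$ and then simulates $M$.

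\textbf{The simulation phase.} After a ``loading phase'' (below) has been run, the second phase is literally the construction from the proof of \cref{permutation-invariant} applied to $M$: each counter value is read off at the current position as a fixed linear combination of prefix counts of the transition tokens, as in (\ref{unary-terms-a})--(\ref{unary-terms-b}), and $M$ reaches a final state iff $w\in L$. The only difference from \cref{sec:unary} is that counter $1$ starts at $\mathrm{enc}(w)$, which is exponential in $|w|$; but this is no obstacle, exactly as ordinary counter values are handled in \cref{sec:unary} — the value is spread across an exponentially long CoT and tracked by one C-RASP term. So all the real work is in the loading phase, and this is where the RPE is needed.

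\textbf{The loading phase.} I would prepend to $M$ a subroutine computing $\mathrm{enc}(w)$ by Horner's scheme: $v\leftarrow 0$; for $\ell=1,\dots,|w|$, set $v\leftarrow k\,v+j_\ell$ with $a_{j_\ell}=w_\ell$. Multiplication by $k$ and addition of a bounded amount are ordinary counter operations, so — modulo the single primitive ``obtain the code $j_\ell$ of $w_\ell$'' — the loading phase is again a counter-machine computation and is simulated as in \cref{sec:unary}. The primitive is implemented via the RPE as follows. Although a C-RASP expression is only ever evaluated at the growing end, a term $\LeftCount_\rperel[\varphi]$ at a position $j$ sees the fiber $R_j=\{i:(i,j)\in\rperel\}$ of the string, which depends only on $j$ and $\rperel$, not on $w$. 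I would choose $\rperel$ so that, as $j$ ranges over the CoT region, the fibers $R_j$ enumerate all singletons $\{1\},\{2\},\{3\},\dots$ in a self-synchronizing way — for instance $\rperel=\{(\sigma(j),j):j\ge 1\}$ where $\sigma$ is the triangular enumeration whose $t$-th block is $1,2,\dots,t$ (so $\sigma(j)\le j$). Then $\LeftCount_\rperel[Q_{a_h}]$ at position $j$ equals $1$ iff the symbol at position $\sigma(j)$ is $a_h$; since for every $\ell$ and every $m\ge\ell$ there are infinitely many CoT positions $j$ (i.e.\ $j>m$) with $\sigma(j)=\ell$, the CoT can emit filler until it reaches such a $j$, read off $w_\ell$, perform the Horner step, and move to $\ell+1$, halting the loop once $\ell>|w|=\LeftCount[\bigvee_h Q_{a_h}]$ (available as a term). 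To tell \emph{which} $\ell$ it is currently reading, the CoT tracks $\sigma$ of the current position by simple counters — offset within the current triangular block, and block index — which can be maintained in lockstep with the RPE-structured enumeration using only C-RASP arithmetic.

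\textbf{Main obstacle and conclusion.} The delicate point — the crux of the whole theorem — is precisely this loading phase: the RPE must be fixed a priori and independent of $|w|$, yet an input of unbounded length must be scanned symbol by symbol from CoT positions that are at no a-priori-bounded distance from it, which rules out any distance-based (in particular periodic or local) RPE and forces a ``density-decreasing'' enumeration such as $\sigma$ together with a self-synchronizing layout; one then has to verify that detecting ``$\sigma(j)=\ell$'', maintaining the Horner accumulator, and interleaving the reads with the arithmetic micro-steps (which themselves advance $j$) are all expressible with C-RASP's very limited arithmetic — the cleanest implementation is probably to perform first a pure copying pass that re-emits $w$ as one contiguous block of CoT tokens (doing no arithmetic), and only afterwards the Horner pass over those copies, re-reading them through a second, disjoint family of fibers of $\rperel$. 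Granting this, \cref{permutation-invariant}'s construction chained onto the loading phase (routine bookkeeping) yields a CoT C-RASP[RPEs] recognizing $L$; \cref{prop:cot-crasp-to-cot-smat} then gives the CoT SMAT[RPEs] statement, and \cref{prop:learnability-cot-crasp} moreover makes $L$ length-generalizably learnable with RPE $\rperel$.
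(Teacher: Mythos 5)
Your high-level architecture --- use the RPE to fold the input word into counter values, then run the counter-machine simulation of \cref{sec:unary} --- is exactly the paper's; Phase~II and the appeals to \cref{universality-cm} and \cref{prop:cot-crasp-to-cot-smat} are fine. The gap is in the loading phase, and it is not a routine detail. Your symbol-by-symbol Horner scan requires, at each CoT position $j$, a test of the form ``the input position $\sigma(j)$ selected by the current RPE fiber equals the current value $\ell$ of my read pointer.'' In C-RASP the read pointer can only live as a prefix count of some CoT token, i.e.\ a quantity attached to the \emph{outer} (last) position; but a formula $\varphi$ occurring inside $\LeftCount[\varphi]$ or $\LeftCount_\rperel[\varphi]$ is evaluated at the \emph{inner} position and has no access to counts taken at the outer position. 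So ``$\sigma(j)=\ell$'' is not an expressible predicate; nor is the block-index/offset bookkeeping for your triangular enumeration (detecting ``$j$ is a block boundary'' is a quadratic condition on $j$, beyond C-RASP's linear count arithmetic, unless you delegate it to the RPE --- at which point the same synchronization problem reappears one level down). Your proposed fix, a preliminary copying pass, inherits the identical difficulty: to copy $w_\ell$ you must still address position $\ell$ with a moving pointer.

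The paper's Phase~I sidesteps this entirely by never addressing input positions one at a time. It encodes, for each letter $a_i$, the whole \emph{set} of positions carrying $a_i$ as a single number (via the map $\beta$), chooses the fixed RPE so that the fiber of the current length $\ell$ is the set of positions of the $\ltr{1}$'s in $\beta(\ell)$, and has the CoT emit dummy tokens until the current length $\ell$ satisfies $\beta(\ell)=w_i$; crucially, this test is a pair of \emph{global} count comparisons, $\LeftCount_\rperel[Q_{a_i}]=\LeftCount[Q_{a_i}]$ and $\LeftCount_\rperel[\top]=\LeftCount[Q_{a_i}]$, both evaluated at the last position, which is squarely within C-RASP's power. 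The resulting encoding is a vector in $\N^n$ read off from marker positions as in \eqref{encoding-term}, rather than your single base-$k$ integer, but that difference is cosmetic; the essential idea missing from your proposal is this ``wait until the current length encodes the right set, then verify by comparing two prefix counts'' mechanism, which is what makes the loading phase expressible without a movable read head.
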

\revised{Membership in CoT SMAT[RPEs] follows from Proposition 
\ref{prop:cot-crasp-to-cot-smat}.}

\OMIT{This implies:\todo{GZ: Why does this imply the following?}
\begin{theorem}
    Each recursively enumerable language over an arbitrary alphabet 
    $\ialphabet$ can be recognized by SMAT[RPEs] in the CoT setting.
\end{theorem}
}

\newcommand{\tl}[1]{\overline{#1}}
The CoT C-RASP[RPEs] constructed in \cref{relation-full-completeness} is based
on the following idea. Given an input $w\in\Sigma^*$ with say $|\Sigma|=n$, our
CoT C-RASP[RPEs] first computes an encoding of $w\in\Sigma^*$ as a vector in
$\N^n$. After this, it uses a construction similar to above to simulate a CM on
this encoding.

\newcommand{\gpara}[1]{\vspace{-0.2cm}\paragraph{#1}}
To avoid confusion between multiplication of $0$ and $1$ on the one
hand and concatenation of words, we will use different symbols for the numbers
$0,1\in\N$ and the letters $\ltr{0}$ and $\ltr{1}$. Then for $a=0$, $b=1$,
$c=\ltr{0}$, and $d=\ltr{1}$, we can distinguish between $ab=0$ and
$cd=\ltr{0}\ltr{1}$. To convert between these objects, we use the notation
$\tl{0}:=\ltr{0}$, $\tl{\ltr{0}}=0$, $\tl{1}=\ltr{1}$, and $\tl{\ltr{1}}=1$.

\gpara{Encoding words over two letters}
We first describe how to encode two-letter words. Formally, we have a partial function
$\beta\colon\N\nrightarrow\{\ltr{0},\ltr{1}\}^*$, 
where $\nrightarrow$ means that $\beta$ is partial, i.e.\ not every number
represents a word. However, if a number represents a word, then it is unique.
A number $x\in\N$ will represent a word if and only if $x\ne 0$. Hence, suppose
$x\ne 0$. Then we can write $x=\sum_{i=0}^m b_i2^i$, where
$b_m,\ldots,b_0\in\{0,1\}$, and $b_m=1$. Let $j=\max\{ i \mid b_i=0\}$ be the left-most position of a zero when writing the most significant bit first. Then we set 
\[ \beta(x):=\tl{b_{j-1}}~\tl{b_{j-2}}~\cdots~\tl{b_0}. \]
In other words, $\beta(x)$ is the word consisting of all digits of $x$'s binary
representation, when reading from most significant bit first, and starting
after the left-most zero. For example, we have
\begin{align*} \beta(2^5+2^3+2^1)=\ltr{1010},&& \beta(2^6)=\ltr{00000},&& \beta(2^4+2^3+2^1)=\ltr{10}. \end{align*}

\gpara{Encoding words over arbitrary alphabets}
Now suppose $\Sigma$ is an arbitrary alphabet with $\Sigma=\{a_1,\ldots,a_n\}$. Then we encode words in $\Sigma^*$ by vectors in $\N^n$. Similar to above, we define a partial function
$\sigma\colon\N^n\nrightarrow\Sigma^*$. 
Let us first describe the domain of $\sigma$. We say that an $n$-tuple
$(w_1,\ldots,w_n)$ of words $w_1,\ldots,w_n\in\{\ltr{0},\ltr{1}\}^*$ is
\emph{consistent} if (i)~the words $w_1,\ldots,w_n$ have the same length, say
$m\in\N$ and (ii)~for every position $i\in[1,m]$, there is exactly one
$j\in[1,n]$ such that $w_j$ has the letter $\ltr{1}$ at position $i$.
Intuitively, the consistent $n$-tuples correspond exactly to the words in
$\Sigma^*$: A word $w\in\Sigma^*$ of length $m$ corresponds to the $n$-tuple
$(w_1,\ldots,w_n)$ where each $w_i$ has length $n$, and the $\ltr{1}$'s in
$w_i$ are exactly at those positions that carry $a_i$ in $w$. This leads to an intermediate partial function $\mu\colon(\{\ltr{0},\ltr{1}\}^*)^n\nrightarrow\Sigma^*$, where $\mu(w_1,\ldots,w_n)$ is defined if and only if $(w_1,\ldots,w_n)$ is consistent, and in that case, $\mu(w_1,\ldots,w_n)\in\Sigma^*$ is the word corresponding to $w_1,\ldots,w_n$.

With this, we are ready to define $\sigma$.
The domain of $\sigma$ consists of those $\bx=(x_1,\ldots,x_n)\in\N^n$ where (i)~all
entries are non-zero and (ii)~the tuple $(\beta(x_1),\ldots,\beta(x_n))$ is
consistent. Moreover, for $\bx=(x_1,\ldots,x_n)\in\dom\sigma$, we set
\[ \sigma(\bx):=\mu(\beta(x_1),\ldots,\beta(x_n)). \]
For example, for $n=2$, we have 
\begin{align*}
\sigma(2^4+2^0, 2^4+2^2+2^1)=\mu(\beta(2^4+2^0, 2^4+2^2+2^1))=\mu(\ltr{001}, \ltr{110})=a_2a_2a_1. 
\end{align*}
An important property of $\sigma$ is that if we change $\bx=(x_1,\ldots,x_n)$ by introducing further $1$'s on the left of some binary representation of $x_i$, then $\sigma(\bx)$ remains the same. For example, we also have
\begin{align*}
\sigma(2^5+2^4+2^0, 2^4+2^2+2^1)=\mu(\beta(2^5+2^4+2^0, 2^4+2^2+2^1))=\mu(\ltr{001}, \ltr{110})=a_2a_2a_1. 
\end{align*}
although we modified the left-most entry by introducing the term $2^5$. Thus, for every $w\in\Sigma^*$ and every $k\in\N$, there is an
$\bx\in\N^n$ such that (i)~all entries in $\bx$ are $\ge k$ and
(ii)~$\sigma(\bx)=w$. 

\gpara{The relative positional encoding}
A key ingredient in our proof is the relative positional encoding (recall that
we have shown that without RPE, \cref{relation-full-completeness} does not
hold). Perhaps surprisingly, the RPE we use in the proof does not depend on the
language we are accepting: It is the same relation for every Turing machine we
want to simulate. Its definition is based on the partial function
$\beta\colon\N\nrightarrow\{\ltr{0},\ltr{1}\}^*$ above. We define the relation $\rperel\subseteq\N\times\N$ as
\[ (i,j)\in \rperel ~~\iff~~ \text{$i\le j$,~$i\in[1,|\beta(j)|]$,~and the word $\beta(j)\in\{\ltr{0},\ltr{1}\}^*$ has $\ltr{1}$ at position $i$}  \]
for every $(i,j)\in\N\times\N$. For example, if $j=2^6+2^5+2^3+2^1+2^0$, then we
have $\beta(j)=\ltr{1011}$ and hence $(1,j),(3,j),(4,j)\in \rperel$, but $(2,j)\notin
\rperel$.

\gpara{Overview} Our C-RASP with CoT will work in \emph{two phases}. During the \emph{first phase}, it prolongs the input so that subsequently, a $\sigma$-encoding of the original input word can be computed using Count-Valued Operations. For this, it relies on the RPE $\rperel$.  In the \emph{second phase}, our C-RASP simulates a counter machine, similar to the permutation-invariant case. 

\newcommand{\dletter}{\Box} 
\newcommand{\cletter}{\boxplus} 
\gpara{Phase I: Constructing encoding of the input word} In order to compute the
$\sigma$-encoding $\bx\in\N^n$ of the input word $w\in\Sigma^*$, our CoT C-RASP
proceeds as follows. It compute the entries $\bx(1),\ldots,\bx(n)$ of $\bx$ in this
order. Suppose $(w_1,\ldots,w_n)$ is the consistent tuple representing $w$,
i.e.\ $\mu(w_1,\ldots,w_n)=w$. To compute $\bx(1)$, our CoT C-RASP appends a
dummy letter $\dletter_1$ until the current word length $\ell$ satisfies
$\beta(\ell)=w_1$. Note that this is possible since there are infinitely many
$\ell$ with $\beta(\ell)=w_1$. Once this holds, we place a special letter
$\cletter_1$. Then, the CoT C-RASP appends a dummy letter $\dletter_2$ until
the current word length satisfies $\beta(\ell)=w_2$, and then places
$\cletter_2$, etc. 

Initially, the last letter will be some $a_i\in\Sigma$. Then, our CoT C-RASP simply outputs $\dletter_1$: We have
\begin{equation} O_{\dletter_1} \leftarrow Q_{a_i} \label{general-initial}\end{equation}
for each $a_i\in\Sigma$. When we have a letter $\dletter_i$ at the end, our CoT C-RASP checks whether the current length $\ell$ already satisfies $\beta(\ell)=w_i$:
\begin{align}
	O_{\cletter_i} &\leftarrow Q_{\dletter_i} \wedge \LeftCount_{\rperel} [Q_{a_i}]=\LeftCount[Q_{a_i}] \wedge \LeftCount_{\rperel}[\top]=\LeftCount[Q_{a_i}] \label{check-beta}\\
	O_{\dletter_i} &\leftarrow Q_{\dletter_i} \wedge (\LeftCount_{\rperel}[Q_{a_i}]\ne \LeftCount[Q_{a_i}] \vee \LeftCount_{\rperel}[\top]\ne\LeftCount[Q_{a_i}] )\label{check-not-beta}
\end{align}
for each $i=1,\ldots,n$. If we evaluate rule \ref{check-beta} on a word of length $\ell$, we check that (i)~the last letter is $\dletter_i$, (ii)~the number of positions $j$ with $(j,\ell)\in\rperel$ that carry $a_i$ equals the total number of positions that carry $a_i$, and (iii)~the number of positions $j$ with $(j,\ell)\in\rperel$ equals the number of positions that carry $a_i$. Thus, conditions (ii) and (iii) say that the positions $j$ with $(j,\ell)\in\rperel$ are precisely those that carry an $a_i$. In other words, $\beta(\ell)=w_i$. If these conditions are met, then the output letter is $\cletter_i$.

Moreover, if we evaluate rule \ref{check-not-beta}, we check that $\beta(\ell)$
does not equal $w_i$ yet. In this case, the output letter is again
$\dletter_i$, and the whole check will be repeated with the next word length.

If the last letter is $\cletter_i$ with $i\le n-1$, then we start computing $\bx(i+1)$: We output $\dletter_{i+1}$ in \ref{next-letter-encode}:
\begin{align}
	O_{\dletter_{i+1}} &\leftarrow Q_{\cletter_i} &&\text{for each $i=1,\ldots,n-1$} \label{next-letter-encode}\\
	O_{\tau} &\leftarrow Q_{\cletter_n} &&\text{for each transition $\tau\in\transrel$ with $\src(\tau)=q_0$}\label{next-letter-cm}
\end{align}
If the last letter is $\cletter_{n}$, we initiate the CM run by outputting some initial transition $\tau$. This is rule \ref{next-letter-cm}.

After the above process, we have placed $\cletter_1,\ldots,\cletter_n$. 
Thus, the current input word is then of the form $w'=w\dletter_1^{f_1}\cletter_1\dletter_2^{f_2}\cletter_2\cdots \dletter_n^{f_n}\cletter_n$, where for the tuple $\bx=(x_1,\ldots,x_n)$ with $x_i=|w|+f_1+\cdots+f_i$, we have $\sigma(\bx)=w$.
A count-valued operation can then access the encoding of $w$ using the terms
\begin{align}
	X_i &= \LeftCount[\LeftCount[\cletter_i]=0] & & \text{for $i=1,\ldots,n$}\label{encoding-term}
\end{align}
Thus, $X_i$ is the number of positions that have no occurrence of $\cletter_i$ to their left (and do not carry $\cletter_i$ themselves). Since there is exactly one occurrence of $\cletter_i$, this means $X_i$ is exactly the position of $\cletter_i$, minus one. Therefore, the term $X_i$ evaluates to $\bx(i)$, meaning we have $\sigma(X_1,\ldots,X_n)=w$.

\gpara{Phase II: Simulating the counter machine}
During the first phase, our CoT C-RASP appended letters to make an encoding
$\bx\in\N^n$ of the input word available through C-RASP terms
\cref{encoding-term}.  We now use a CM that starts with this encoding in its
counters and then decides whether $w\in L$. Such a counter machine exists
because of \cref{universality-cm} and the fact that $S=\{\bx\in\N^n \mid
\sigma(\bx)\in L\}$ is recursively enumerable (since $\sigma$ is computable).
The simulation of the CM on $\bx$ works exactly like in \cref{sec:unary},
except that in the terms defined in \eqref{unary-terms-a}, instead of using
$\LeftCount[Q_{a_i}]$ for $i=1,\ldots,n$, we use the C-RASP term $X_i$ defined
in \eqref{encoding-term}. See \cref{app:general} for details.

\revised{
	\begin{example}
		Let us illustrate the case of the language $L=\{a,b\}^*b$ of
		words that end in $b$.  
		We will need a CM that recognizes the set $S=\{\bx\in\N^2 \mid \sigma(\bx)\in L\}$ of encodings of words in $L$. Observe that $\bx\in\N^2$ satisfies $\sigma(\bx)\in L$ if and only if $\bx(1)$ is even: This is because for $x\in\N$ where $\beta(x)$ is non-empty, the string $\beta(x)\in\{\ltr{0},\ltr{1}\}^*$ ends in $\ltr{0}$ if and only if $x$ is even. Therefore, our CM in \cref{fig:cm-unary} recognizes exactly $S$.
Thus, our CoT C-RASP will have the following rules.  For Phase I, it has the
		rules
		\labelcref{general-initial,check-beta,check-not-beta,next-letter-encode,next-letter-cm,encoding-term},
		where $a_1=a$ and $a_2=b$.  For Phase II, we want to simulate
		the CM from \cref{ex:parity}, and so we introduce the same
		rules as
		\labelcref{unary-example-initial,unary-example-non-initial},
		except that in \labelcref{unary-example-non-initial}, $Q_a$ is replaced with $X_1$ everywhere. This way, we simulate the CM in \cref{fig:cm-unary} on some encoding $\bx\in\N^2$ of the input $w$ (i.e.~$\sigma(\bx)=w$) and then check whether $\bx(1)$ is even.
	\end{example}
}


\section{Empirical Experiments}
\label{sec:exp}

We empirically validate our Turing-completeness results on some complex 
arithmetical concepts. Our theory predicts that CoT C-RASP with 
NoPE suffices for unary representation (of numbers), while RPEs are needed
for binary representation. The arithmetic tasks presented in \cref{tab:languages} comprise Prime, Exponential, Division, Greatest Common Divisor, and Multiplication. Accordingly, we conduct three experiments: 1) $\emph{Unary}$ without positional encodings, 2) $\emph{Binary}$ with RPEs, and 3) $\emph{Binary}$ without RPEs.  For each task, we construct two counter machines (CMs), one for the $\emph{Unary}$ representation and 
one for the $\emph{Binary}$ representation. 

\begin{table}[h]
	\centering
	\renewcommand{\arraystretch}{1.2}
	\begin{tabular}{lll}
		\toprule
		\textbf{Language} & \textbf{Unary Representation} & \textbf{Binary Representation} \\
		\midrule
		$\mathsf{Prime}$ & $\{\,a^p : p \in \mathbb{P}\,\}$ 
		& $\{\,\mathrm{bin}(p) : p \in \mathbb{P}\,\}$ \\
		$\mathsf{Exponential}$   & $\{\,a^{2^i} : i \geq 0\,\}$ 
		& $\{\,\mathrm{bin}(i) \# \mathrm{bin}(j) : j = 2^i\,\}$ \\
		$\mathsf{Division}$   & $\{\,a^i b^j : j \mid i\,\}$ 
		& $\{\,\mathrm{bin}(i)\#\mathrm{bin}(j) : j \mid i\,\}$ \\
		$\mathsf{Greatest~Common~Divisor}$   & $\{\,a^i b^j c^k : k = \gcd(i,j)\,\}$ 
		& $\{\,\mathrm{bin}(i)\#\mathrm{bin}(j)\#\mathrm{bin}(k) : k = \gcd(i,j)\,\}$ \\
		$\mathsf{Multiplication}$   & $\{\,a^i b^j c^k : k = i \cdot j\,\}$ 
		& $\{\,\mathrm{bin}(i)\#\mathrm{bin}(j)\#\mathrm{bin}(k) : k = i \times j\,\}$ \\
		\bottomrule
	\end{tabular}
	\caption{$\emph{Unary}$ and $\emph{Binary}$ representation of arithmetic languages. 
		Here $\mathbb{P}$ is the set of prime numbers, 
		$j \mid i$ denotes divisibility, 
		$\gcd(i,j)$ is the greatest common divisor, 
		and $i \times j$ is multiplication.}
	\label{tab:languages}
\end{table}

\revised{
	We employ a decoder-only LLaMA architecture~\cite{touvron2023llamaopenefficientfoundation}, implemented in Hugging Face Transformers,\footnote{\url{https://huggingface.co/meta-llama}}
	and train all weights from scratch without any pre-trained initialization.
	The model is trained on inputs of length~[1-100] and evaluated on three test sets: an in-distribution split with lengths~[1-100] ($test_0$), and two out-of-distribution splits with lengths~[101-200] ($test_1$) and~[201-300] ($test_2$).} The SMATs are trained using AdamW (weight decay~0.01) with a batch size of~64 and maximum 30k steps.
To prevent overfitting, we use an EarlyStopping callback that monitors validation loss and stops training if the model's accuracy reaches 100\% on the in-distribution test set ($test_0$) for three consecutive epochs.

\revised{
	The result of the experiments are shown in Table~\ref{tab:combined_results}. SMAT achieves strong in-distribution performance on $\emph{Unary}$ representations, with accuracy exceeding \revised{99.90}\%. It also generalizes well to longer sequences,
	maintaining high accuracy. In contrast, the
	$\emph{Binary}$ representation with RPEs exhibits near-perfect
	generalization across all three test splits,
	consistently achieving 100\% accuracy. However, removing RPEs causes generalization to break down: only Prime
	reaches around 95\% on $test_0$, and all tasks exhibit almost no
	generalization.
	Together, these results show a clear contrast: \emph{Unary} inputs
	generalize naturally with NoPE, whereas $\emph{Binary}$ inputs require
	RPEs to achieve any meaningful length generalization.
}

\begin{table}[h]
	\centering
	\renewcommand{\arraystretch}{1.15}
	\scalebox{0.95}{
		\begin{tabular}{l ccc ccc ccc}
			\hline
			\multirow{2}{*}{\textbf{Language}} 
			& \multicolumn{3}{c}{\textbf{Unary}} 
			& \multicolumn{3}{c}{\textbf{Binary (w/ RPE)}} 
			& \multicolumn{3}{c}{\textbf{Binary (w/o RPE)}} \\
			& $test_0$ & $test_1$ & $test_2$
			& $test_0$ & $test_1$ & $test_2$
			& $test_0$ & $test_1$ & $test_2$ \\[0.3ex]
			
			\hline
			Prime & 100 & 100 & 100 & 100 & 100 & 100 & 95.00 & 0.40 & 0.00 \\
			Exponential & 99.95 & 99.96 & 99.96 & 100 & 100 & 100 & 82.80 & 0.06 & 0.00 \\
			Division & 99.90 & 100 & 99.99 & 100 & 100 & 100 & 76.40 & 0.02 & 0.00 \\
			Greatest Common Divisor & 99.99 & 100 & 99.70 & 100 & 100 & 100 & 70.20 & 0.03 & 0.00 \\
			Multiplication & 99.99 & 100 & 99.98 & 100 & 100 & 100 & 64.40 & 0.02 & 0.00 \\
			\hline
		\end{tabular}
	}
	\caption{Generalization accuracy on three test sets ($test_0,test_1,test_2$)
    in unary/binary.}
	\label{tab:combined_results}
\end{table}

%

\section{Concluding remarks}
\label{sec:conc}

\paragraph{Related work.}
Similar to our work, \cite{hou2025universal} aims to provide length-generalizing constructions for Turing completeness. However, there are two key differences.
First, we demonstrate the existence of softmax transformer constructions, whereas  \cite{hou2025universal} only demonstrated constructions in RASP \citep{weiss2021thinking}.
Second, the approach of \cite{hou2025universal} ensures length generalization only if no $n$-grams are repeated, for some fixed $n$, which is likely to be unrealistic in the limit of long inputs.
In contrast, our approach theoretically ensures full-length generalizability.

\paragraph{Future work.}
Recent results have refined Turing-completeness for transformers (albeit with 
hard attention) by relating the number of CoT steps and complexity classes.
For example, in \citep{Will-cot}, CoT transformers with polynomially number of 
CoT steps correspond to classes solvable by polynomial-time algorithms. Similar
results were also recently derived in \citep{constant-cot}, which relate to
space complexity. We leave it for future work to refine our Turing-completeness
results with computational complexity.

\subsubsection*{Acknowledgments}
We thank Pablo Barcelo, Michael Benedikt, David Chiang, Andy Yang, Will 
Merrill, and Jon Rawski for helpful discussions.
Jiang and Lin are supported by 
European Research Council (grant number \href{https://cordis.europa.eu/project/id/101089343}{101089343}).

\label{beforebibliography}
\newoutputstream{pages}
\openoutputfile{main.pages.ctr}{pages}
\addtostream{pages}{\getpagerefnumber{beforebibliography}}
\closeoutputstream{pages}

\bibliography{iclr2026_conference}
\bibliographystyle{iclr2026_conference}

\appendix

\section{Additional material on Section~\ref{sec:model}}\label{app:model}
\subsection{Formal Definition of Softmax Transformers.} 
Our definition of softmax transformers follows that of \cite{len-gen-huang}, though we use a highly simplified notation here for exposition.
In a SoftMax Averaging Transformers (SMAT), 
given a sequence
\[
    \vecV_1,\ldots,\vecV_n
\]
a single layer outputs
\[
    \vecW_1,\ldots,\vecW_n
\]
where
\[
    \vecW_i := \vecV_i + C(\vecV_i')
\]
\revised{where $C(\cdot)$ is a feedforward network,}
 $\vecV_i' := \sum_{j=1}^i \bar w(j)\vecV_j$ and 
\begin{equation}
    \bar w = \softmax(\log n \cdot \{\vecV_j^T {\bf K}^T {\bf Q} \vecV_i\}_{j=1}^i)
    \label{eq:softmax-appendix}
\end{equation}
where $\vecV_i$ denotes activations at position $i$, and ${\bf K}$, ${\bf Q}$ transform these to keys and queries, respectively.
Here, scaling with $\log n$ is included, as it is needed to theoretically represent sparse functions across unboundedly input strings and circumvent theoretical limitations of soft attention \citep{chiang2022overcoming, edelman2022inductive}.
Here, we show the case of a single head, extension to multiple heads is straightforward.

We assume $C$ is a one-layer feedforward layer, where each hidden unit has either ReLU or Heaviside activation.
Here, as in \cite{len-gen-huang}, Heaviside is needed to theoretically represent functions with sharp thresholds; at any finite input length, it can be arbitrarily closely approximated using ReLU MLPs.

\cite{len-gen-huang} also assume that attention logits are rounded to fixed precision; we do not require this for our results here.
Also, whereas \cite{len-gen-huang} consider Absolute Positional Encodings (APE), which necessitated introducing fixed context windows and positional offsets, we do not consider APE here, and so do not need to introduce offsets. Thus, SMATs considered in the present paper are uniformly applicable to arbitrarily long inputs.

To interface
SMAT with an input string $w \in \ialphabet^+$, we apply a token
embedding function $\tokenem: \ialphabet \to \R^k$ for some dimension $k$; these are followed by some number of SMAT layers.
To
define a CoT SMAT, we need the transformer to be able to output a token.
To this end, we define an output function $o : \R^d \to \Sigma$, parameterized by applying a linear function $\R^d \to \R^{|\Sigma|}$ followed by an argmax selecting the symbol receiving the highest score.

Overall, we view an SMAT as a length-preserving map $T : \Sigma^* \to \Sigma^*$, where $T(x)_i$ indicates the symbol predicted after reading the prefix $x_1\dots x_i$.

\paragraph{Discussion} Our formalization of SMAT follows the setting of \cite{len-gen-huang}, which was designed to study the learnability of transformers.
We note two aspects, which are needed to enable softmax transformers to represent functions across arbitrarily long inputs, and overcome well-known theoretical limitations of softmax attention \citep{hahn2020theoretical, chiang2022overcoming}.
First, scaling attention logits with $\log n$ is necessary to represent sparse attention to specific positions, which otherwise would be impossible to achieve using softmax attention \citep{hahn2020theoretical,chiang2022overcoming, edelman2022inductive}.
Importantly, this scaling does not involve any new learnable parameters.
Second, using Heaviside activations is necessary to represent functions with sharp thresholds, as is needed to perform exact comparison of counts across unboundedly long lengths. At any finite input length, Heaviside can be arbitrarily closely approximated using ReLU MLPs.
We view Heaviside (which is not differentiable) as a theoretical proxy for steep ReLU network as is standardly trainable.


\subsection{Proofs for CoT Expressiveness and Learnability}\label{app:model:proofs}

\begin{proof}[Proof of Proposition~\ref{prop:cot-crasp-to-cot-smat}]
This is a simple extension of Theorem 9 in \cite{len-gen-huang}, as we now explain.

We define a CoT as a  map $\Sigma^* \rightarrow \Sigma^*$ from an input string $w \in \ialphabet^*$ to the sequence $w_2 \dots, w_N$ generated by a CoT C-RASP or CoT SMAT on the input string $w$.
Starting from a CoT generated by a CoT C-RASP program, we aim to translate it to a CoT generated by a CoT SMAT.

We first explain the case without RPEs.
We need to show that, if a CoT is generated in C-RASP CoT, then there is an SMAT generating the same CoT.
In the case of language acceptance by a single binary label computed at the final token, Theorem 9 in \cite{len-gen-huang} shows that C-RASP can be simulated by a \emph{limit transformer} without positional information.
Our first observation is that, in the model of \cite{len-gen-huang}, a limit transformer without positional information is equivalent to a standard transformer without positional encodings and infinite context window, which in turn is equivalent to an SMAT as defined in our paper here.
The proof of Theorem 9 in  \cite{len-gen-huang} builds a transformer that computes the values of all boolean predicates computed in the C-RASP program at each position in the string, with one dimension in the model's activations fo each boolean predicate.
This means that the truth values of the expressions $\varphi_{a_i}$ appearing in the switch condition $S$ can also be computed.
In order to evaluate the switch condition, we add another layer (whose attention heads have zero value matrices, i.e., don't contribute), then linearly project the relevant entries onto a binary vector of length $|\Gamma|$, and apply a piecewise linear function to convert this into a one-hot vector selecting the lowest-index token $a_i$ such that $\varphi_{a_i}$ is true.
We now have a limit transformer which at each position outputs a one-hot vector indicating which CoT token to output.
This means, whenever a CoT is expressible in C-RASP CoT, it is also expressible by SMAT with CoT.

We now consider the case with RPEs.
We again build on Theorem 9 in \cite{len-gen-huang}.
We first note that the definition of attention logits with RPE exactly matches the definition of attention logits in Limit Transformers with functions $\phi$ in \cite{len-gen-huang}, where $\phi(i,j)$ is simply $\sem{R}(i,j)$.
Hence, for the purpose of expressivity, any SMAT[RPEs] transformer is equivalent to a limit transformer.
Then, when translating from C-RASP to SMAT, implementing an RPE into an attention head proceeds along exactly the same lines as the translation of the special case $\#[j \leq i : \psi(i,j)] P(j)$ in the proof of that theorem.
\end{proof}

\begin{proof}[Proof of \ref{prop:learnability-cot-crasp}]

We first consider the case without RPEs.
We build on Theorem 7 in \cite{len-gen-huang} and its variant for transformers without positional encodings, Corollary 18 in \cite{len-gen-huang}.
First, from Proposition~\ref{prop:cot-crasp-to-cot-smat}, we know that if a language is expressible in C-RASP CoT, then it is also expressible by SMAT with CoT.
The proof of that proposition further notes that our model of SMAT is equivalent to a limit transformer without positional information.
Then, by Corollary 18 in \cite{len-gen-huang}, any input-output map expressible by a limit transformer without positional information is length-generalizably learnable.
This proves the result for the case without RPEs.

We now consider the case with RPEs.
The proof is similar to the previous case; however, we need to (i) show that C-RASP[RPEs] can be simulated by SMATs with RPE, (ii) length generalization for SMAT RPE transformers follows from expressibility by SMATs with RPE.
First, regarding (i), we again build on Theorem 9 in \cite{len-gen-huang}, extending our argument from the proof of Proposition~\ref{prop:cot-crasp-to-cot-smat}.
We first note that the definition of attention logits with RPE exactly matches the definition of attention logits in Limit Transformers with functions $\phi$ in \cite{len-gen-huang}, where $\phi(i,j)$ is simply $\sem{R}(i,j)$.
Hence, for the purpose of expressivity, any SMAT[RPEs] transformer is equivalent to a limit transformer.
Then, when translating from C-RASP to SMAT, implementing an RPE into an attention head proceeds along exactly the same lines as the translation of the special case $\#[j \leq i : \psi(i,j)] P(j)$ in the proof of that theorem.
Second, regarding (ii), we use  Corollary 18 in \cite{len-gen-huang} and note that the addition of fixed (not learned) RPE to attention heads in both the learned transformers and limit transformers has no impact on the argument.
\end{proof}

\subsection{More on Relative Positional Encodings}\label{app:model:rpe}

\revised{
Here, we discuss how our formalization of Relative Positional Encodings (RPEs) relates to prior work on RPEs.
Recall that we define Relative Positional 
Encodings (RPEs) as subsets $\rperel \subseteq \N \times \N$, defining attention weights as:
\begin{equation}
    \bar w = \softmax(\log n \cdot \{\vecV_j^T {\bf K}^T {\bf Q} \vecV_i + \underbrace{\lambda \sem{\rperel}(i,j)}_{\text{RPE term}}
    \}_{j=1}^i).
    \label{eq:softmax-rpe}
\end{equation}
The key is the RPE term, which adds a position-dependent bias to the attention logits. Here, we interpret $\lambda$ as a bias term and  $\sem{\rperel}(i,j)$ as 1 if $(i,j) \in \sem{\rperel}$; otherwise, it is 0.}

\revised{Oue formalization abstracts \emph{additive relative positional encodings} (additive RPEs), which add a position-dependent term to the attention logits \citep{shaw18,dai2019transformer,xue2021mt5,press2022train,he2021deberta}.
Schemes in the literature differ in whether they are parameter-free (e.g., \cite{press2022train}) or involve learnable parameters.
We consider the especially simple case where $R$ is determined a-priori, parameter-free, and independent of the task at hand. Here, we review relevant prior work on additive RPEs; we write $q_i := {\bf Q} \vecV_i$ and $k_j := {\bf K} \vecV_j$ for brevity.}
\revised{
\begin{enumerate}
\item \citep{shaw18}: Here, the RPE term is $q_i^T a_{i-j}$, where $a_{i-j}$ is a learned embedding depending on the relative distance $i-j$ (their Eq.~5).
\item \citep{dai2019transformer}: Here, the RPE term is $q_i^T  r_{i-j} + u^T k_j + v^T r_{i-j}$, where $r_{i-j}$ is a learned embedding depending on the relative distance $i-j$, and $u,v$ are learned global vectors.
\item \citep{xue2021mt5}: Here, the RPE term is $b_{i-j}$, where $b_{i-j}$ is a learned scalar bias depending on the relative distance $i-j$.
\item \citep{press2022train}: Here, the RPE term is $m \cdot (i-j)$, where $m$ is a learned scalar slope.
\item \citep{he2021deberta}: Here, the RPE term is $q_i^T r_{i-j} + u^T k_j + v^T r_{i-j}$, where $r_{i-j}$ is a learned embedding depending on the relative distance $i-j$, and $u,v$ are learned global vectors. This is very similar to \cite{dai2019transformer}.
\end{enumerate}
Another popular class of RPEs are \emph{multiplicative} RPEs, which transform the key and query vectors with position-dependent matrices \citep{su2024roformer}.
Our RPEs are closest to those of \citep{xue2021mt5} and \citep{press2022train}, as they involve adding a scalar bias to the attention logits. Whereas \citep{xue2021mt5} learn a separate bias for each possible relative distance, we only require a single $R$ determined a-priori, with no learnable parameters beyond the scalar $\lambda$. In our theoretical analysis, this parameter-free nature is useful for length generalization, ensuring that the number of learned parameters need not increase with the input length.}

\subsection{Primer on \citet{len-gen-huang}}\label{app:model:len-gen-huang}

\revised{
As our results build on \citet{len-gen-huang}, we provide a brief primer on their key definitions and results here.
We define both syntax and semantics of C-RASP in the main paper.
Here, we provide a simple example, illustrating the formal language  $L=\Sigma^*ab\Sigma^*$, taken from \citet{len-gen-huang}:}
\begin{tcolorbox}[title={$C-RASP$ program for $L=\Sigma^*ab\Sigma^*$ over $\Sigma=\{a,b\}$ (from \citet{len-gen-huang})} ]
\begin{calign}
    C_{a-}(i) & := \cttn{j\leq i, j=i-1}{Q_a(j)} && \text{\# of immediately preceding $a$}\\
    P_{a-}(i) &:= C_{a-}(i) \geq 1 && \text{Position $i-1$ holds an $a$}\\
    Q_{ab}(i) &:= Q_b(i) \land P_{a-}(i) && \text{A substring $ab$ ends at position $i$}\\
    C_{ab}(i) & := \cttn{j\leq i}{Q_{ab}(j)} && \text{\# of substrings $ab$}\\
     L(i) & := C_{ab}(i) \geq 1 && \text{At least one $ab$ precedes position $i$}
\end{calign}
\end{tcolorbox}

\revised{
We now introduce the key definitions and results from \citet{len-gen-huang} that we build on.
As we focus on No Positional Encodings (NoPE) and Relative Positional Encodings (RPE) transformers, we only define the relevant hypothesis classes here; this makes the analysis easier than in \citet{len-gen-huang}, who also consider APE transformers, which caused a substantial amount of further complexity. In particular, the assumption of ``translation invariance'' used by \citet{len-gen-huang} is not needed here.
}

\revised{The idealized learning procedure of \citet{len-gen-huang} is centered around minimizing a regularizer $\mathcal{R}$ mapping transformers $T$ to numbers, favoring simpler and smaller transformers.
It is defined in terms of (i) the number of heads, (ii) the precision used in the transformer's attention computations, (iii) the ranks and norms of the various parameter matrices and vectors. 
The learning model applies to the class $\mathcal{F}$ of length-preserving functions $f$ mapping strings to sequences of vectors.
The idealized learning procedure (``Inference Procedure'') is then defined as follows:
\begin{definition}[Inference Procedure, from \citet{len-gen-huang}]\label{def:inference-procedure}
Given a function $f \in \mathcal{F}$, the \emph{Inference Procedure} obtains a sequence of transformers $T_1, T_2, \dots$ as follows. Define $U_n$ as the set of transformers matching the behavior of $f$ on all inputs of length $\leq \frac{n}{2}$.
Then choose $T_n \in U_n$ such that
\begin{equation}\label{eq:minimization-regularizer}
    \mathcal{R}(T_n) \leq \frac{1}{n} + \inf_{T \in U_n} \mathcal{R}(T) %
\end{equation}
\end{definition}
Here, the term $\frac{1}{n}$ is used because the class $U_n$ is infinite and the infimum may not be attained; approximate minimization of the regularizer is sufficient.
Depending on whether we consider NoPE or RPE transformers, the transformers $T_n$ are taken from the corresponding hypothesis class with NoPE or RPE.
}

\revised{
\cite{len-gen-huang} then show that length generalization in this learning model is equivalent to expressibility by a class of idealized transformers called Limit Transformers. As we focus on the NoPE and RPE cases, the result simplifies to the following statement:
\begin{theorem}[Guaranteed Length Generalization in the Limit, simplified from \cite{len-gen-huang}]\label{thm:guarantee}
Let $f \in \mathcal{F}$.
Then the following are equivalent:
\begin{enumerate}
\item $f$ is expressible by a single transformer that computes $f$ across all input lengths (NoPE or RPE).
\item (Guaranteed Length Generalization) %
Applying the Inference Procedure from Definition~\ref{def:inference-procedure} (either in the NoPE or RPE setup, matching the encoding in (1))  to $f$ generates a sequence $T_1, T_2, \dots$ with $\sup_{n =1,2,3,\dots} {\mathcal{R}}(T_n) < \infty$, for which there is some $N_0$ such that, for all $m > N_0$, $T_m$ matches $f$ on all inputs of any length $k \leq m$.
\end{enumerate}
\end{theorem}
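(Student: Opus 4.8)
The plan is to prove the equivalence by establishing each direction separately, with the core work being a compactness argument on the sublevel sets of the regularizer $\mathcal{R}$. Throughout, the key structural fact I would exploit is that $\mathcal{R}$ simultaneously controls the number of heads, the attention precision, the ranks of the parameter matrices, and their norms; the first three quantities are integer-valued, so any sublevel set $\{\,T : \mathcal{R}(T)\le B\,\}$ contains only finitely many discrete architectures, and within each architecture the remaining continuous parameters range over a compact set.

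For the direction (1) $\Rightarrow$ (2): suppose a single transformer $T^\ast$ computes $f$ at all input lengths. Then $T^\ast$ matches $f$ in particular on all inputs of length $\le n/2$, so $T^\ast \in U_n$ for every $n$, whence $\inf_{T\in U_n}\mathcal{R}(T)\le \mathcal{R}(T^\ast)$. By the approximate-minimization clause~(\ref{eq:minimization-regularizer}), the Inference Procedure returns $T_n$ with $\mathcal{R}(T_n)\le \tfrac1n+\mathcal{R}(T^\ast)\le 1+\mathcal{R}(T^\ast)$, giving the uniform bound $\sup_n\mathcal{R}(T_n)<\infty$. For the ``eventually matches $f$ everywhere'' clause, I would confine every $T_n$ to the sublevel set $\{\mathcal{R}\le 1+\mathcal{R}(T^\ast)\}$ and use the finiteness above: up to input-output behavior this sublevel set realizes only finitely many length-preserving functions, so among those that are \emph{not} identical to $f$ there is a maximal finite length $L$ at which disagreement with $f$ first occurs. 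For any $n$ with $n/2>L$, every such ``bad'' behavior is automatically excluded from $U_n$ because it already disagrees with $f$ at some length $\le L < n/2$; hence each $T_n\in U_n$ with $\mathcal{R}(T_n)\le 1+\mathcal{R}(T^\ast)$ must compute $f$ at every length, and we may take $N_0=2L$.

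For the direction (2) $\Rightarrow$ (1): the hypothesis gives $\sup_n\mathcal{R}(T_n)=B<\infty$, so all $T_n$ live in $\{\mathcal{R}\le B\}$. Using the same decomposition into a finite discrete part and a compact continuous part, I would pass to a subsequence along which the architecture is constant and the continuous parameters converge to a limiting object $T^\ast$ (a Limit Transformer). The essential point is stability of behavior under this limit: because of the $\log n$ logit scaling and the Heaviside/argmax thresholds, for each fixed length $k$ the output of $T_{n_j}$ on all inputs of length $k$ is eventually constant and equals that of $T^\ast$. Since $T_{n_j}$ matches $f$ at all lengths $\le n_j$ for large $j$, in particular at length $k$ whenever $n_j\ge k$, the limit $T^\ast$ matches $f$ at every length $k$, so $T^\ast$ is the desired single transformer.

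The main obstacle is exactly the behavioral rigidity used in both directions: that the bounded-regularizer class realizes only finitely many behaviors with uniformly bounded disagreement lengths, and that the limiting parameter configuration induces a genuine transformer whose discrete (Heaviside/argmax) outputs coincide with the tail of the sequence at each fixed length. This is precisely the content of the Limit Transformer construction of \citet{len-gen-huang}, and the NoPE/RPE restriction is what makes it tractable here: with no absolute positional encodings there are no context-window offsets to reconcile and no translation-invariance hypothesis to impose, while the RPE contribution $\lambda\sem{\rperel}(i,j)$ is a fixed, parameter-free additive bias that passes to the limit unchanged. I would therefore organize the proof as a specialization of their Theorems~7 and~9 (and Corollary~18) to these two encoding regimes, verifying that each simplification leaves the compactness and stability steps intact.
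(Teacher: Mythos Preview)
The paper does not prove this theorem. It is stated in Appendix~\ref{app:model:len-gen-huang} purely as background, explicitly labeled ``simplified from \cite{len-gen-huang}'', with no accompanying proof or sketch; it serves only as context for the paper's own learnability results (Propositions~\ref{prop:cot-crasp-to-cot-smat} and~\ref{prop:learnability-cot-crasp}). So there is nothing in the paper to compare your attempt against.

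On the merits of your sketch as an independent argument: the overall strategy is the right one and is presumably close to what \cite{len-gen-huang} do, but there is an internal tension you should resolve. In (1)$\Rightarrow$(2) you assert that the sublevel set $\{\mathcal{R}\le B\}$ realizes only finitely many input--output behaviors, and use this to extract a uniform ``first-disagreement length'' $L$. In (2)$\Rightarrow$(1) you instead treat the same sublevel set as a compact continuum and pass to a subsequential limit. These are two different regimes. If the precision component of $\mathcal{R}$ bounds the bit-precision of the \emph{parameter} matrices (as the hypothesis class $\Theta$ in this paper suggests), then together with the norm and rank bounds the sublevel set is literally finite, both directions simplify, and no Limit-Transformer limit is needed: for (2)$\Rightarrow$(1) some fixed $T^\ast$ recurs infinitely often and works directly. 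If instead the precision bound applies only to attention logits and parameters range over a genuine continuum, then your finiteness claim in (1)$\Rightarrow$(2) is not free: a compact parameter set can in principle realize infinitely many behaviors as input length grows, since the Heaviside and argmax thresholds slice parameter space differently at each length. In that regime you need the stability analysis you allude to at the end (that limits of bounded-$\mathcal{R}$ transformers are themselves transformers with the same discrete outputs at each fixed length) already for the forward direction, not just the reverse. Either way the proof goes through, but you should commit to one reading of $\mathcal{R}$ and make the corresponding argument uniform.
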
}
\revised{These definitions and results concern an idealized learning procedure that assumes that all data up to input length $\frac{n}{2}$ is fitted perfectly for training; recent follow-up work has expanded by providing more quantitative analyses when only finite data is available \citep{chen2025non, izzo2025quantitative}.
\cite{len-gen-huang} further provide a translation from C-RASP to transformers, which we build on in our results.}

\section{Additional material on Section~\ref{sec:unary}}\label{app:unary}
In this subsection, we prove \cref{lm:bounded} from \cref{permutation-invariant}.

	Suppose $\ialphabet=\{a_1,\ldots,a_n\}$.  If $L\subseteq a_1^*\cdots
	a_n^*$ is recursively enumerable, then so is the language
	$K=\{u\in\ialphabet^* \mid \exists v\in L\colon
	\Parikh(u)=\Parikh(v)\}$ of all permutations of $L$. Moreover, $K$ is
	permutation-invariant, and thus recognized by a CoT C-RASP according to
	\cref{permutation-invariant}. Since $L=K\cap a_1^*\cdots a_n^*$, to
	turn that CoT C-RASP into a CoT C-RASP for $L$, it remains to check
	that the input word belongs to the set $a_1^*\cdots a_n^*$. Therefore, for all rules $O_a\leftarrow P$, where $P$ is a C-RASP expression, we use 
	\[ O_a\leftarrow P\wedge \bigwedge_{1\le i<j \le n} \LeftCount[Q_{a_i}\wedge\LeftCount[Q_{a_j}]>0]=0, \]
	where the second conjunct says that there are no positions carrying an
	$a_i$ that have at least one $a_j$ with $j>i$ to their left. Then, the
	modified C-RASP clearly recognizes $K\cap a_1^*\cdots a_n^*=L$.

\section{Additional material on Section~\ref{sec:general}}\label{app:general}
\paragraph{Details of Phase~II}
In this section, we present the details of Phase~II of the construction in \cref{sec:general}. For this, first observe that
\[  S=\{\bx\in\N^n \mid \sigma(\bx)\in L\} \]
is recursively enumerable (since $\sigma$ is computable).
is recursively enumerable, since the partial function $\sigma$ is computable. Therefore, by \cref{universality-cm}, there is a $(n+3)$-counter machine $(\states,\transrel,q_0,\finals)$ such that for any $\bx\in\N^n$, we have $\bx\in S$ if and only if from the configuration $(q_0,\bx,0,0,0)$, the counter machine eventually reaches a control state in $\finals$. 

We simulate a step of the counter machine using the following rule. If the CoT C-RASP finds the letter $\tau$ as the last letter, then for each possible next transition $\tau'$, it checks whether its guard $\varphi_{\tau'}$ is satisfied, and if so, executes $\tau'$ by outputting $\tau'$. Thus, we have
\[ O_{\tau'} \leftarrow \varphi_{\tau'}(t_1,\ldots,t_{n+3}) \wedge Q_{\tau} \]
for any two transitions $\tau,\tau'\in\transrel$ for which $\tgt(\tau)=\src(\tau')$. Here, $t_1,\ldots,t_{n+3}$ are the following terms:
\newcommand{\tmpaddgeneral}{X_i+~}
\begin{align*}
	t_i&=\tmpaddgeneral\sum_{\rho\in\transrel} \bu_\rho(i)\cdot\LeftCount[Q_\rho]  & &\text{for $i=1,\ldots,n$, and } \\
	t_i&=\phantom{\tmpaddgeneral}\sum_{\rho\in\transrel} \bu_\rho(i)\cdot\LeftCount[Q_\rho]  & &\text{for $i=n+1,n+2,n+3$,}
\end{align*}
where $X_i$ is the count-valued C-RASP term from (\ref{encoding-term}).
For $i\in\{n+1,n+2,n+3\}$, $t_i$ is just the sum of counter effects on counter $i$. Equivalently, $t_i$ is the current value of counter $i$ after executing all these transitions.
For $i\in[1,n]$, $t_i$ we also add $X_i$, which has the effect that the counters $1,\ldots,n$ are initialized with $X_i$.

Finally, our CoT C-RASP accepts if the output symbol is any $\tau\in\transrel$
with $\tgt(\tau)\in \finals$.

\paragraph{Other Proofs}
\begin{proof}[Proof of Lemma~\ref{lemma:polynomial-automaton}]
If $L$ is recognized by a CoT C-RASP, then it is also recognized by an SMAT C-RASP by Lemma~\ref{prop:cot-crasp-to-cot-smat}.
In fact, our model of SMAT is equivalent to the NoPE special case of the Limit Transformers of \citet{len-gen-huang}.
Now Theorem 12 in \citet{len-gen-huang} shows the following:
Take any $k$.
For each string $w \in \Sigma^*$, let $F(w) \in \Gamma^* \cup \Gamma^\omega$ be the associated CoT by which the language is recognized via an SMAT.
Assume Alice has access to the prefix of $wF(w)$ of length $k$, and Bob has access to the remainder, then Alice needs to communicate just $\mathcal{O}(\log k)$ bits to allow Bob to compute the output of the SMAT at all positions $k+1, k+2, \dots$.
In fact, Theorem 12 in \citet{len-gen-huang} is stated for the special case where $k$ is half the input length, but the argument is entirely general, as it only relies on the length of Alice's part.

Note that, if the CoT terminates before $k-|w|$ steps, Alice can just communicate that.
Now given the SMAT recognizes $L$ via CoT, Bob can determine\footnote{This is not decidable, but Bob in this model is a computationally unconstrained agent, with communication between Alice and Bob as the only bottleneck.} from Alice's communication if a given string is in the language or not.

Now we construct a family of NFAs accepting the language as follows.

For $x, y \in \Sigma^*$, define $x \equiv_{AB} y$ if and only if, for all $z \in \Sigma^*$, Alice communicates the same to Bob on $xz$ and $yz$.
By definition, each equivalence class of this relation is a subclass of a Nerode equivalence class of $L$ ($\dagger$).

Given any length bound $n \in \N$, let $Q_n$ be the set of all $\equiv_{AB}$-classes represented by at least some words of length $\leq n$.
By the result described above, $|Q_n|$ is bounded by $\leq \sum_{k=1}^n 2^{\mathcal{O}(\log k)} = \mathcal{O}(poly(n))$.
Now, by definition of the congruence, $Q_n$ is the state set of an automaton computing $\equiv_{AB}$-equivalence classes.
By ($\dagger$), it recognizes $L$.

\end{proof}

\section{Additional material on Section~\ref{sec:exp}}\label{app:exp}

\subsection{Dataset Construction}
\label{subsec:dataset-construction}

For each task shown in Table \ref{tab:languages}, we generate paired datasets of input strings and  $k$-CM output traces under two
encoding regimes:  $\emph{Unary}$ and \emph{Binary} encoding.

\paragraph{Unary Encoding.}
In the unary setting, we work over small alphabets such as \(\{a\}\) for $\mathsf{Prime}$, \(\{a,b\}\) for $\mathsf{Exponential}$, and $\mathsf{Division}$ and
\(\{a,b,c\}\) for $\mathsf{Greatest~Common~Divisor}$  and $\mathsf{Multiplication}$. Here, input strings \(w\) are sampled uniformly at random from these alphabets
within given length ranges, without enforcing that they encode tuples of
integers satisfying the intended arithmetic relation (e.g.\ words are not
constrained to be of the form \(a^i b^j c^k\)).

Given a deterministic $k$-counter machine (or $k$-CM)
\[
M = (P, \Delta, q_0, F),
\]
and a unary word $w \in \Sigma^*$, we view $w$ simply as an \emph{input} to $M$.
Since $M$ is deterministic, the run of $M$ on $w$ is uniquely defined.  
Writing $w = w_1 w_2 \cdots w_{|w|}$, the induced computation is the sequence
\[
(q_0,\mathbf{c}_0)
\xrightarrow{w_1}
(q_1,\mathbf{c}_1)
\xrightarrow{w_2}
\cdots
\xrightarrow{w_{|w|}}
(q_{|w|},\mathbf{c}_{|w|}),
\]
where $(q_t,\mathbf{c}_t)$ denotes the configuration after reading the $t$-th
symbol of $w$.

For a transition $\tau = (p,\varphi,q,u) \in \Delta$, we use the standard
notation $\src(\tau) := p$, $\tgt(\tau) := q$, $\varphi_\tau := \varphi$, and
$u_\tau := u$.  The \emph{target sequence} associated with $w$ is then defined as
\[
\mathrm{target}(w)
:= \bigl( \tau_t \bigr)_{t=1}^{|w|},
\]
where $\tau_t$ is the unique transition of $M$ taken at step $t$ of the above
run.  Because $M$ is deterministic, the sequence $\mathrm{target}(w)$ is
well-defined and uniquely determined by $w$.

\paragraph{Binary Encoding.}
In the binary setting, integers are represented in canonical binary form
(with no leading zeros), over alphabets
\(\Sigma \in \bigl\{\{0,1\},\, \{0,1,/ \}\bigr\}\).
For the tasks \(\mathsf{Greatest\ Common\ Divisor}\) and
\(\mathsf{Multiplication}\), we construct inputs of the form
\(
\mathrm{bin}(x)\;/\;\mathrm{bin}(y)\;/\;\mathrm{bin}(z),
\)
while \(\mathsf{Exponential}\) and \(\mathsf{Division}\) use binary pairs
\(\mathrm{bin}(w)\;/\;\mathrm{bin}(v)\),
and \(\mathsf{Prime}\) uses a single binary encoding \(\mathrm{bin}(n)\).

Each binary sample is labelled \emph{positive} when the intended arithmetic
relation holds (e.g., \(z = x + y\), \(z = x \cdot y\),
\(z = \gcd(x,y)\), \(w \mid v\), \(z = x^{y}\), or \(n\) is prime).
Negative samples are generated by replacing the \emph{input} component with a nearby
but incorrect integer that satisfies the required bit-length constraints.

As in the unary setting, the \emph{input string} is fed directly to the model,
and the \emph{supervision signal} is given by the $K$-CM trace obtained by running
the corresponding deterministic \(k\)-CM on this binary input; thus the $\mathrm{target}$
sequence is uniquely defined.

\subsection{DETAILS OF EXPERIMENTAL SETUP}

\paragraph{Prompt and Predicted Output.}
For every input string \(w\), we prepare the model input in a prefix--LM format.
The model receives the prompt
\(
\scriptstyle
\begin{array}{|c|c|c|}
	\hline
	\text{SOS} & \text{INPUT} & \text{SEP} \\
	\hline
\end{array}
\)
where \texttt{INPUT} denotes either the unary or binary representation of the
original string \(w\).  After the separator token, the model is required to
autoregressively generate the target region
\(
\scriptstyle
\begin{array}{|c|c|}
	\hline
	\text{TARGET} & \text{EOS} \\
	\hline
\end{array}
\)
where \texttt{TARGET} encodes \(\tau(w)\), the unique accumulator trace produced
by the deterministic \(k\)-CM when executed on \(w\).  Thus the complete
input--target sequence used during training has the form
\(
\scriptstyle
\begin{array}{|c|c|c|c|c|}
	\hline
	\text{SOS} & \text{INPUT} & \text{SEP} & \text{TARGET} & \text{EOS} \\
	\hline
\end{array}.
\)

During training, we apply the standard autoregressive language modeling
objective, but we restrict the cross-entropy loss to the \texttt{TARGET} region
(\texttt{TARGET--EOS}), ensuring that the model learns to generate the $\mathrm{target}$
trace~$\tau(w)$ conditioned on the \texttt{INPUT} prefix.  
At evaluation time, we report exact match (EM) over the entire predicted
output region: an example receives score~1 if the model's generated sequence
matches $\tau(w)$ exactly, and~0 otherwise.

\paragraph{Architecture and hyperparamters}
All models in this work are trained \emph{from scratch}, without any pretrained
weights. We use a decoder-Only Transformer architecture \emph{LLaMA}, but with
the standard SwiGLU activation replaced by a ReLU nonlinearity in all
feed-forward blocks.  Beyond the activation change, we also modify the positional encoding mechanism:
the \emph{Unary} representation uses NoPE, whereas the
\emph{Binary} representation uses our relative positional encodings. Apart from these substitutions, the model follows
the standard LLaMA design, including multi-head self-attention, layer
normalization, and residual connections. Our empirical results show that the architecture performs robustly under both
the \emph{Unary} and \emph{Binary} encodings considered in this work.

The hyperparameters used for each task are listed in \cref{tab:hpm}, including the number of layers, attention heads, embedding dimension, learning rate, and maximum training steps.

\begin{table}[h]
	\centering
	\renewcommand{\arraystretch}{1.25} 
	\begin{tabular}{l l l c c}
		\hline
		\textbf{Language} &
		\textbf{Representation} &
		\textbf{Model Size} &
		\textbf{LR} &
		\textbf{Max Steps} \\
		\hline
		\multirow{3}{*}{$\mathsf{Prime}$}
		& Unary     & 1 layer; 1 head; 32 dim    & 1e--3 & 30k \\
		& Binary$^{\rperel}$     & 1 layer; 1 head; 64 dim    & 1e--3 & 30k \\
		& Binary$^{\mathcal{N}}$   & 6 layer; 4 head; 256 dim   & 1e--3 & 30k \\
		\multirow{3}{*}{$\mathsf{Exponential}$}
		& Unary     & 1 layer; 1 head; 32 dim    & 1e--3 & 30k \\
		& Binary$^{\rperel}$     & 1 layer; 1 head; 64 dim    & 1e--3 & 30k \\
		& Binary$^{\mathcal{N}}$   & 6 layer; 4 head; 256 dim   & 1e--3 & 30k \\
		\multirow{3}{*}{$\mathsf{Division}$}
		& Unary     & 4 layer; 2 head; 128 dim   & 1e--3 & 30k \\
		& Binary$^{\rperel}$     & 1 layer; 1 head; 64 dim    & 1e--3 & 30k \\
		& Binary$^{\mathcal{N}}$   & 6 layer; 4 head; 256 dim   & 1e--3 & 30k \\
		\multirow{3}{*}{$\mathsf{Greatest\ Common\ Divisor}$}
		& Unary     & 3 layer; 1 head; 128 dim   & 1e--3 & 30k \\
		& Binary$^{\rperel}$     & 1 layer; 1 head; 64 dim    & 1e--3 & 30k \\
		& Binary$^{\mathcal{\mathcal{N}}}$   & 6 layer; 4 head; 256 dim   & 1e--3 & 30k \\
		\multirow{3}{*}{$\mathsf{Multiplication}$}
		& Unary     & 3 layer; 1 head; 64 dim    & 1e--3 & 30k \\
		& Binary$^{\rperel}$     & 1 layer; 1 head; 64 dim    & 1e--3 & 30k \\
		& Binary$^{\mathcal{N}}$   & 6 layer; 4 head; 256 dim   & 1e--3 & 30k \\
		\hline
	\end{tabular}
	\caption{
		Hyperparameters used for training LLaMA-style decoder-only Transformers on each 
		task, across the \emph{Unary} (NoPE) and \emph{Binary} 
		(Binary$^{\rperel}$   with RPEs, Binary$^{\mathcal{N}}$ without RPEs) representations. 
		All models use ReLU activations and are trained from scratch with AdamW. 
		Weight decay is $0.01$ for $\mathsf{Prime}$, $\mathsf{Exponential}$, and $\mathsf{GCD}$; 
		$0.05$ for $\mathsf{Division}$; and $0.03$ for $\mathsf{Multiplication}$.
	}
	\label{tab:hpm}
\end{table}

\label{sec:exp_setup}

\label{afterbibliography}
\newoutputstream{pagestotal}
\openoutputfile{main.pagestotal.ctr}{pagestotal}
\addtostream{pagestotal}{\getpagerefnumber{afterbibliography}}
\closeoutputstream{pagestotal}

\newoutputstream{todos}
\openoutputfile{main.todos.ctr}{todos}
\addtostream{todos}{\arabic{@todonotes@numberoftodonotes}}
\closeoutputstream{todos}

\end{document}